\newtheorem{lemma}{\bf Lemma}
\newcounter{step}
\newlength{\totlinewidth}
\newenvironment{algorithm}{%
  \rule{\linewidth}{1pt}
  \begin{list}{}%
    {\usecounter{step}%
      \settowidth{\labelwidth}{\textbf{Step 2:}}%
      \setlength{\leftmargin}{\labelwidth}%
      \setlength{\topsep}{-2pt}%
      \addtolength{\leftmargin}{\labelsep}%
      \addtolength{\leftmargin}{2mm}%
      \setlength{\rightmargin}{2mm}%
      \setlength{\totlinewidth}{\linewidth}%
      \addtolength{\totlinewidth}{\leftmargin}%
      \addtolength{\totlinewidth}{\rightmargin}%
      \setlength{\parsep}{0mm}%
      \raggedright}}%
  {\end{list}%
  \rule{\linewidth}{1pt}}
\newcounter{substep}
\newlength{\aligntop}
\newlength{\alignbot}
\newcommand\semihuge{\@setfontsize\semihuge{19.3}{25}}
\newcommand\semismall{\@setfontsize\semihuge{12.4}{15}}
\begin{document}

\title{\huge Optimizing Wireless Resource Management and Synchronization in Digital Twin Networks \vspace{-0.1cm}}

\author{\large{Hanzhi Yu, Yuchen Liu, \textit{Member IEEE}, Zhaohui Yang, \textit{Member IEEE}, } \\
\large{Haijian Sun, \textit{Member IEEE}, and Mingzhe Chen, \textit{Senior Member IEEE}\vspace{-0.2cm}} 

\thanks{Hanzhi Yu and Mingzhe Chen are with the Department of Electrical and Computer Engineering and Frost Institute for Data Science and Computing, University of Miami, Coral Gables, FL 33146 USA (Emails: \protect\url{hanzhiyu@miami.edu}; \protect\url{mingzhe.chen@miami.edu}).} 
\thanks{Yuchen Liu is with the Department of Computer Science, North Carolina State University, Raleigh, NC 27695 USA (Email: \protect\url{yuchen.liu@ncsu.edu}).} 
\thanks{Zhaohui Yang is with the College of Information Science and Electronic Engineering, Zhejiang University, Hangzhou 310027, China (Email: \protect\url{yang_zhaohui@zju.edu.cn}).}
\thanks{Haijian Sun is with the School of Electrical and Computer Engineering, University of Georgia, Athens, GA 30602 USA (Email: \protect\url{hsun@uga.edu}).}
\thanks{This work was supported by the U.S. National Science Foundation under Grants CNS-2350076, CNS-2312139, CNS--2312138, and SaTC--2350075. }

}
\maketitle
%
\begin{abstract}
In this paper, we investigate an accurate synchronization between a physical network and its digital network twin (DNT), which serves as a virtual representation of the physical network. The considered network includes a set of base stations (BSs) that must allocate its limited spectrum resources to serve a set of users while also transmitting its partially observed physical network information to a cloud server to generate the DNT. Since the DNT can predict the physical network status based on its historical status, the BSs may not need to send their physical network information at each time slot, allowing them to conserve spectrum resources to serve the users. However, if the DNT does not receive the physical network information of the BSs over a large time period, the DNT's accuracy in representing the physical network may degrade. To this end, each BS must decide when to send the physical network information to the cloud server to update the DNT, while also determining the spectrum resource allocation policy for both DNT synchronization and serving the users. We formulate this resource allocation task as an optimization problem, aiming to maximize the total data rate of all users while minimizing the asynchronization between the physical network and the DNT. The formulated problem is challenging to solve by traditional optimization methods, as each BS can only observe a partial physical network, making it difficult to find an optimal spectrum allocation strategy for the entire network. To address this problem, we propose a method based on the gated recurrent units (GRUs) and the value decomposition network (VDN). The GRU component allows the DNT to predict future status using the historical data, effectively updating itself when the BSs do not transmit the physical network information. The VDN algorithm enables each BS to learn the relationship between its local observation and the team reward of all BSs, allowing it to collaborate with others in determining whether to transmit physical network information and optimizing spectrum allocation. Simulation results show that our GRU and VDN based algorithm improves the weighted sum of data rates and the similarity between the status of the DNT and the physical network by up to 28.96\%, compared to a baseline method combining GRU with the independent Q learning (IQL).
\end{abstract}

\begin{IEEEkeywords}
Resources allocation, digital network twin, gate recurrent units, value decomposition network.
\end{IEEEkeywords}

\section{Introduction}\label{se:intro}
Digital twin (DT) is a virtual representation of a physical product or process, used to understand and predict the physical counterpart's performance characteristics \cite{9429703, 9839640, liu2024digital}. Different from those traditional simulation tools, which use computer-based models or mathematical concepts to test systems, processes, and the effects of various variables, a DT utilizes real-time data from its associated physical object for simulations, analysis, and online control. The two-way information flow improves the performance of the predictive analytical model \cite{9854866}. Based on the definition of the DT, the concept of the DT network (DTN), or the digital network twin (DNT), is generated to describe a network that constructed by multiple one-to-one DTs. A DNT uses advanced communication technologies to realize information sharing not only between each physical object and its twin, but also among different physical objects and among different twins \cite{9429703, 10012285}. According to previous works, the creation of a DNT presents several challenges. First, constructing a DNT requires mapping not only physical objects but also several unique networking factors (i.e., network protocols, wireless channel dynamics, and the network performance metrics). Hence, it is impractical to directly map all network features for DNT generation and one must select appropriate network features for DNT creations \cite{zhang2024mapping}. Second, there is no standardized metrics exist for evaluating the DNT synchronization \cite{sharma2022digital}. Third, privacy protection in DTNs is a crucial issue, as they are vulnerable to data breaches and malicious attacks during information exchange, necessitating robust security measures to protect users' private information. \cite{9429703, 10148936}.

A number of existing works \cite{8764584, 10234596, 10148936, zhou2022digital} have studied the generation and creations of DNTs. In particular, the authors in \cite{10148936} presented several use cases, the design standardization, and an implementation example of the DNT. In \cite{8764584}, the authors designed a deep neural network (DNN) based method to create a DNT for the approximation of the optimal user association scheme in a mobile edge computing network. The work in \cite{10234596} created a DNT using a Bayesian model. In \cite{zhou2022digital}, the authors proposed a DT empowered framework for optimizing network resource management, where DTs collect real-time data from users to predict and dynamically allocate network resources, in order to enhance efficiency of resource utilization and reduce reconfiguration costs. However, most of these works \cite{8764584, 10234596, 10148936, zhou2022digital} did not consider the similarity and synchronization between the DNT and a physical network, and assumed that the DNT is already synchronized with a physical network without considering how the DNT collects physical network data to achieve synchronization. 

To address this issue, a number of existing studies \cite{hashash2022edge, guo2023resource, zheng2023data, 9865226} have focused on the DT synchronization optimization. In particular, the authors in \cite{hashash2022edge} proposed a continual learning framework to build a synchronized DNT of a autonomous vehicle network so as to make vehicle driving decisions. In \cite{guo2023resource}, the authors proposed a DT-empowered framework for resource allocation in UAV-assisted edge mobile networks, leveraging a deviation model to address discrepancies between the DT and physical states. In \cite{zheng2023data}, the authors aimed to optimize data synchronization in vehicular DT networks by developing a deep learning model that predicts relay waiting times. In \cite{9865226}, the authors designed a dynamic hierarchical synchronization framework for IoT-assisted DTs in the Metaverse, and the framework optimizes synchronization intensities through a multi-level game-theoretic approach. However, these works 
\cite{hashash2022edge, guo2023resource, zheng2023data, 9865226} did not consider how the generation of DNTs affect the physical network performance since DNT generations require the transmission of a large amount of data, which will also introduce significant communication overhead. 

The main contribution of this work is to design a novel DNT framework that jointly optimizes the performance of a physical network and the synchronization between the DNT and the physical network. Our key contributions include: 
\begin{itemize}
    \item We consider a DNT enabled network that consists of a physical network with a set of base stations (BSs), several users, and a DNT. The DNT is a virtual representation of the physical network and can predict physical network dynamics. Each BS must use limited spectrum resources to serve the users and transmit the information of the physical network to a cloud server to generate the DNT. Since the DNT can predict the physical network status, the BSs may not need to transmit the information to the server at every time slot, and thus conserving spectrum resources to better serve the users. To this end, each BS in the physical network needs to determine whether to transmit the physical network information to the cloud server for updating the DNT, and optimize spectrum resource allocation for the users and physical network information transmission. We formulate this problem as an optimization problem aiming to maximize the data rates of all users while minimizing the gap between the states of the physical network and the DNT. 
    \item The formulated problem is challenging to solve since each BS cannot fully observe the entire status of the physical network. To solve this problem, we proposed a machine learning (ML) method that integrates the gate recurrent units (GRUs) \cite{cho2014learning, chung2014empirical} and the value-decomposition based reinforcement learning (VD-RL) method \cite{sunehag2017value}. The GRUs allow the DNT to predict its future status using historical data and to update its status when the DNT cannot receive the information of the physical network. The VD-RL method can learn from the GRU prediction accuracy to enable each BS to decide its associated users, whether to send the physical network information to the cloud server, and RB allocation, optimizing the data rate of all users in the physical network and ensuring an accurate synchronization between the physical network and the DNT. Compared to other RL methods \cite{tan1993multi}, the VD-RL method allows each BS to use its partial observation, specifically, the locations of the users in its coverage, to collaboratively find a globally optimal solution for both physical network and DNT. 
\end{itemize}
Simulation results show that our proposed method improves the weighted sum of data rates and the the similarity between the status of the DNT and the physical network by up to 28.96\%, compared to a baseline method combining GRU with the independent Q learning (IQL).

The rest of this paper is organized as follows. The system model and problem formulation are introduced in Section \ref{se:system}. The design of the GRU and VD-RL integrated method will be introduced in Section \ref{se:solution}. The analysis of the complexity, the convergence, and the implementation of the designed method are studied in Section \ref{se:analysis}. Simulation results are presented and discussed in Section \ref{se:simulation}. Finally, conclusions are drawn in Section \ref{se:conclusion}. 

\section{System Model and Problem Formulation}\label{se:system}
We consider a DNT enabled network which consists of: 1) a physical network including a set $\mathcal{U}$ of $U$ mobile users and a set $\mathcal{M}$ of $M$ BSs, and 2) a DNT that is generated and controlled by a cloud server, as shown in Fig. \ref{fig:sys}. Each BS can provide communication service to the users that are located in its coverage. Meanwhile, each BS must transmit the physical network information collected from its associated users to the cloud server for the generation of the DNT. We first introduce the mobility model of each user. Next, we introduce the transmission model. Then, we describe the DNT model. Finally, we present our considered optimization problem. Table \ref{tab:notation} provides a summary of the notations used hereinafter. 
\begin{figure}[!t]
  \begin{center}
    \includegraphics[width=8cm]{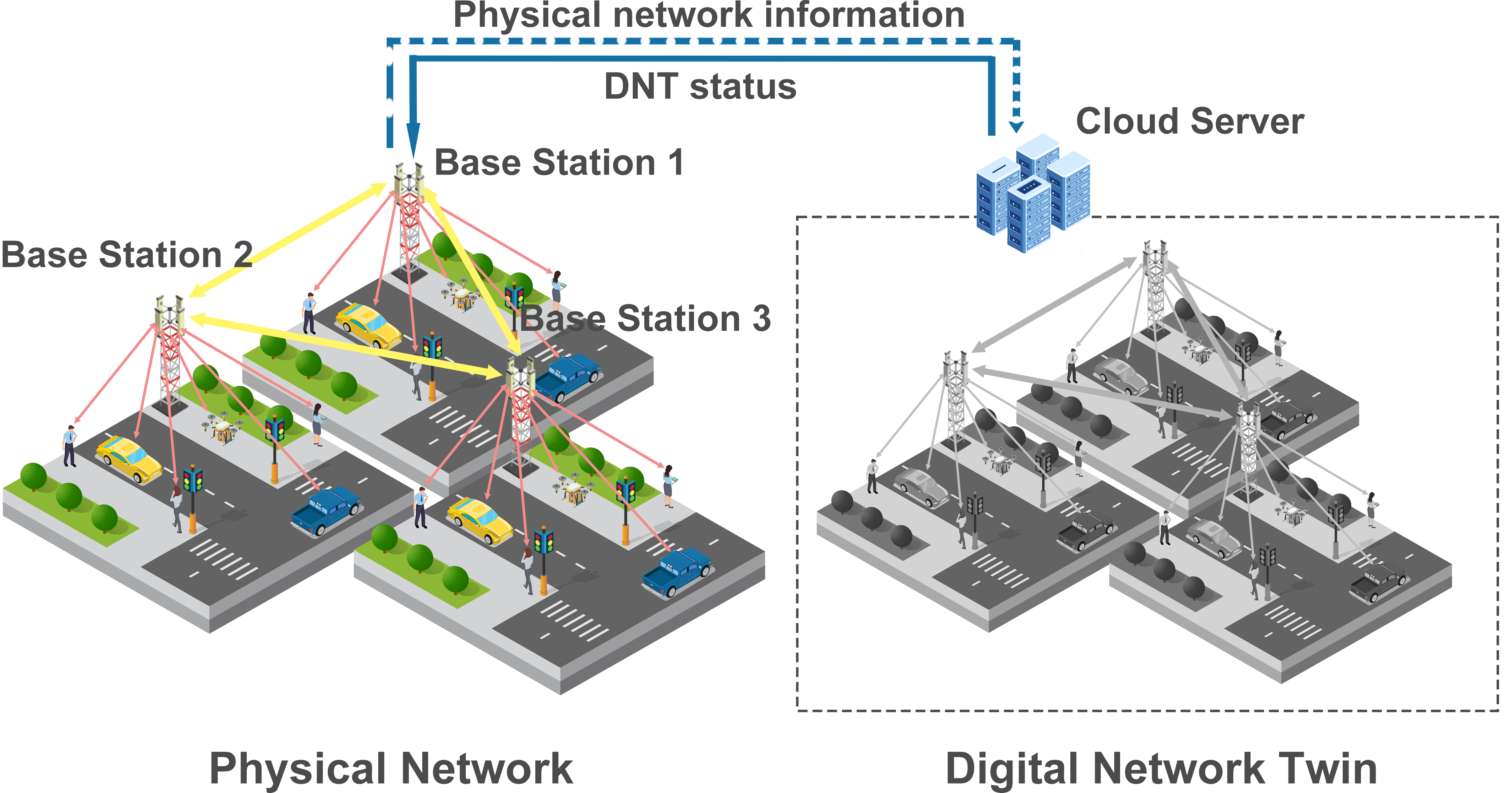}
    \caption{The considered DNT enabled network. } 
    \label{fig:sys}
  \end{center}
\end{figure}

\begin{table*}
\renewcommand\arraystretch{1.1}
\caption{List of Notations}
\label{tab:notation}
\centering
\begin{tabular}{|c|c|c|c|}
\hline
\textbf{Notation} & \textbf{Description} & \textbf{Notation} & \textbf{Description}\\
\hline

$N$ & Number of RBs of each BS & $U$ & Number of users \\
\hline
$\mathcal{N}$ & Set of RBs & $\mathcal{U}$ & Set of users \\
\hline
$\boldsymbol{p}_u$ & Probability vector of user $u$ chooses each possible movement & $\boldsymbol{l}^\textrm{U}_{u,t}$ & The position of user $u$ at time slot $t$ \\
\hline
$c_{umt}$ & Data rate of BS $m$ transmitting data to user $u$ at time slot $t$ & $\Delta l$ & Distance a user can move in one time slot \\
\hline
$d_{umt}$ & Distance between user $u$ and BS $m$ at time slot $t$ & $\boldsymbol{x}_{umt}, \boldsymbol{y}_{mt}$ & RB allocation vector \\
\hline
$I^\textrm{U}_{umt,n}, I^\textrm{C}_{mt,n}$ & Interference caused by other BSs that use RB $n$ & $B$ & Bandwidth of each RB \\
\hline
$c^\textrm{C}_{mt}$ & Data rate of BS $m$ transmitting data to the cloud server & $P$ & Transmission power \\
\hline
$h_m \left( \boldsymbol{l}_{u,t}^\textrm{U} \right)$ & Channel gain between user $u$ and BS $m$ at time slot $t$ & $o_u$ & Rayleygh fading parameter \\
\hline
$h_m \left( \boldsymbol{l}^\textrm{C} \right)$ & Channel gain between BS $m$ and the cloud server & $\boldsymbol{l}^\textrm{C}$ & The position of the cloud server \\
\hline
$D_m$ & The size of data needed to be transmitted to the cloud server & $\boldsymbol{l}_m^\textrm{M}$ & The position of BS $m$ \\
\hline
$T_{mt}$ & Transmission delay of BS $m$ transmitting data to the cloud server & $\boldsymbol{s}_t$ & Physical network status at time slot $t$ \\
\hline
$\boldsymbol{s}^m_t$ & Partial physical network status observed by BS $m$ & $\overline{\boldsymbol{s}}^m_t$ & A mapping of $\boldsymbol{s}^m_t$ \\
\hline
$\boldsymbol{\hat s}^m_t$ & The status of BS $m$ estimated by the DNT & $\epsilon$ & Weight parameter \\
\hline
$\boldsymbol{h}_t$ & Hidden states of the GRU at time slot $t$ & $\alpha$ & Threshold of the transmission delay \\
\hline
\makecell[c]{$\boldsymbol{W}^{\textrm{r}}, \boldsymbol{W}^{\overline{\textrm{h}}}, $ \\ $\boldsymbol{W}^{\textrm{z}}, \boldsymbol{W}^{\textrm{o}}$} & Weight matrix of the GRU & $\boldsymbol{U}^{\textrm{r}}, \boldsymbol{U}^{\overline{\textrm{h}}}, \boldsymbol{U}^{\textrm{z}}$ & Weight matrix of the GRU \\
\hline
$N^{\textrm{h}}$ & Number of units in the hidden layer of the GRU & $\boldsymbol{r}^{\textrm{G}}_t$ & The reset gate of the GRU \\
\hline
$K$ & Length of the GRU input sequence & $\tilde{\boldsymbol{h}}_{\tau}$ & Candidate hidden state \\
\hline
$\boldsymbol{z}_{mt}$ & Association users vector of BS $m$ at time slot $t$ & $\boldsymbol{z}^{\textrm{G}}_{\tau}$ & The update date of the GRU \\
\hline
$\lambda_G$ & Learning rate of the GRU & $\boldsymbol{a}^m_t$ & The action of BS $m$ at time slot $t$ \\
\hline
$\boldsymbol{a}_t$ & The joint action of all BSs at time slot $t$ & $\boldsymbol{\pi}^m \left( \boldsymbol{a}^m_t | \boldsymbol{s}^m_t \right)$ & The policy of BS $m$ \\
\hline
$\xi_{ut}$ & Number of BSs that serve user $u$ at time slot $t$ & $r \left( \boldsymbol{s}_t, \boldsymbol{a}_t \right)$ & The team reward \\
\hline
$\rho$ & Penalty for one user obtaining RBs from multiple BSs & $\mathcal{G}^m$ & Bipartite graph of BS $m$ \\
\hline
$\mathcal{U}^m$ & Set of vertices represent the associated users of BS $m$ & $\mathcal{N}^m$ & Set of vertices represent the RBs of BS $m$ \\
\hline
$\mathcal{E}^m$ & Set of edges that connect vertices from $\mathcal{U}^m$ and $\mathcal{N}^m$ & $e^m_{un}$ & The edge in $\mathcal{E}^m$ \\
\hline
$\mathcal{X}^m_t$ & A subset of $\mathcal{E}^m$ & $\psi^m_{un}$ & The weight of edge $e_{un}^m$ \\
\hline
$\boldsymbol{\theta}_m$ & The parameters of BS $m$'s Q network & $\boldsymbol{\tilde{\theta}}_m$ & The parameters of BS $m$'s target network \\
\hline
$\mathcal{D}_g$ & Set of transitions used to train the Q networks in epoch $g$ & $G$ & Number of the Q networks training epoch \\
\hline
$D$ & Transitions collected in one training epoch & $\lambda_Q$ & Learning rate of the Q networks \\
\hline
$\theta^{\textrm{h}}$ & Size of the hiddent states of the Q networks & & \\
\hline
\end{tabular}
\end{table*}

\subsection{Mobility Model}
For simplicity, we use a random walk model to describe the mobility of each user \cite{tabassum2019fundamentals}. At each time slot $t$, each user $u$ can choose from five possible movements: 1) stay at the current location, 2) move forward, 3) move backward, 4) move left, and 5) move right. The probability of each user $u$ choosing each possible movement is $\boldsymbol{p}_u = \left[ p_{u,1}, p_{u,2}, p_{u,3}, p_{u,4}, p_{u,5} \right]$. We assume that the position of user $u$ at time slot $t$ is $\boldsymbol{l}^{\textrm{U}}_{u,t} = \left[ l^{\textrm{U}}_{u,t,1}, l^{\textrm{U}}_{u,t,2} \right]$. Thus, the position of user $u$ at time slot $t+1$ is 
\begin{equation}\label{eq:move}
     \boldsymbol{l}_{u, t+1}^\textrm{U} = \left \{
        \begin{aligned}
            &\left[ l_{u,t,1}^\textrm{U}, l_{u,t,2}^\textrm{U} \right] &&\text{with probability } p_{u,1}, \\
            &\left[ l_{u,t,1}^\textrm{U}, l_{u,t,2}^\textrm{U} + \Delta l \right] &&\text{with probability } p_{u,2}, \\
            &\left[ l_{u,t,1}^\textrm{U}, l_{u,t,2}^\textrm{U} - \Delta l \right] &&\text{with probability } p_{u,3}, \\
            &\left[ l_{u,t,1}^\textrm{U} - \Delta l, l_{u,t,2}^\textrm{U} \right] &&\text{with probability } p_{u,4}, \\
            &\left[ l_{u,t,1}^\textrm{U} + \Delta l, l_{u,t,2}^\textrm{U} \right] &&\text{with probability } p_{u,5}, \\
        \end{aligned}
    \right.
\end{equation}
with $\Delta l$ being the distance each user can move in one time slot. 

\subsection{Transmission Model}
In our considered network, each BS $m$ utilizes an orthogonal frequency division multiple access (OFDMA) technique to serve its associated users and transmit the information of the physical network over a set $\mathcal{N}$ of $N$ resource blocks (RBs). Each user can only be served by one BS and each BS will only allocate one RB to a user at each time slot. The transmission rate of BS $m$ transmitting data to user $u$ at time slot $t$ is 
\begin{equation}\label{eq:c}
\begin{split}
    c_{umt} &\left( \boldsymbol{x}_{umt}, \boldsymbol{X}_{\left( -m \right) t}, \boldsymbol{Y}_{\left( -m \right) t} \right) \\
    & = \sum_{n=1}^N x_{umt,n} B \log_2 \left( 1 + \frac{P h_m \left( \boldsymbol{l}_{u,t}^\textrm{U} \right)}{I^\textrm{U}_{umt,n} + B N_0} \right), 
\end{split}
\end{equation}
where 
$\boldsymbol{X}_{\left( -m \right) t} = \left[ \boldsymbol{x}_{ujt} \right]_{u \in \mathcal{U}, j \in \mathcal{M}, j \neq m}$, $\boldsymbol{x}_{umt} = \left[ x_{umt,1}, ..., x_{umt,N} \right]$ is an RB allocation vector, with $x_{umt,n} \in \{ 0,1 \}$ indicating whether RB $n$ of the BS $m$ is allocated to user $u$ at time slot $t$; 
$\boldsymbol{Y}_{\left( -m \right) t}  = \left[ \boldsymbol{y}_{1t}, \ldots, \boldsymbol{y}_{\left( m-1 \right)t}, \boldsymbol{y}_{\left( m+1 \right)t},\ldots, \boldsymbol{y}_{Mt} \right]$ with $\boldsymbol{y}_{mt} = \left[ y_{mt,1}, ..., y_{mt,N} \right]$ being the RB allocation vector of BS $m$ for the physical network information transmission, with $y_{mt,n} \in \{ 0,1 \}$ indicating whether RB $n$ of BS $m$ is allocated for the physical network information transmission;
$B$ is the bandwidth of each RB; $P$ is the transmit power; $h_m \left( \boldsymbol{l}_{u,t}^\textrm{U} \right) = o_u d_{umt}^{-2}$ is the channel gain between user $u$ and BS $m$, with $o_u$ being the Rayleigh fading parameter, $d_{umt} = \sqrt{\lVert \boldsymbol{l}_{u,t}^\textrm{U} - \boldsymbol{l}_m^\textrm{M} \rVert_2}$ being the distance between user $u$ and BS $m$ at time slot $t$, $\boldsymbol{l}_m^\textrm{M}$ being the position of BS $m$; $N_0$ is the noise power spectral density; $h_j \left( \boldsymbol{l}^\textrm{C} \right)$ is the channel gain between BS $j$ and the cloud server, with $\boldsymbol{l}^\textrm{C}$ being the position of the cloud server; and $I^\textrm{U}_{umt,n}$ is the interference caused by other BSs that use RB $n$ for serving users and physical network information transmissions, which is given by 
\begin{equation}
\begin{split}
    I^\textrm{U}_{umt,n} = \sum_{i \in \mathcal{U}, i \neq u} \sum_{j \in \mathcal{M}, j \neq m} &\left(x_{ijt,n} P h_j \left( \boldsymbol{l}^\textrm{U}_{u,t} \right) \right. \\
    &\left. + y_{jt,n} P h_j \left( \boldsymbol{l}^\textrm{U}_{u,t} \right) \right). 
\end{split}
\end{equation}
Similarly, the transmission rate of BS $m$ transmitting physical network information to the cloud server at time slot $t$ is 
\begin{equation}\label{eq:i_u}
\begin{split}
    c^\textrm{C}_{mt} &\left( \boldsymbol{y}_{mt}, \boldsymbol{X}_{\left( -m \right) t}, \boldsymbol{Y}_{\left( -m \right) t} \right) = \\
    &\sum_{n=1}^N y_{mt,n} B \log_2 \left( 1 + \frac{P h_m \left( \boldsymbol{l}^\textrm{C} \right)}{I^\textrm{C}_{mt,n} + B N_0} \right), 
\end{split}
\end{equation}
where $I^\textrm{C}_{mt,n}$ is the interference caused by other BSs that use RB $n$ for serving users and physical network information transmission, which is given by
\begin{equation}\label{eq:i_c}
    I^\textrm{C}_{mt,n} = \sum_{i \in \mathcal{U}} \sum_{j \in \mathcal{M}, j \neq m} \left( x_{ijt,n} P h_j \left( \boldsymbol{l}^\textrm{C} \right) + y_{jt,n} P h_j \left( \boldsymbol{l}^\textrm{C} \right) \right). 
\end{equation}
We assume that the data size of the physical network information that is needed to be transmitted to the cloud server is $D_m$. Given the data rate $c^\textrm{C}_{mt}$, the transmission delay of BS $m$ transmitting physical network information to the cloud server at time slot $t$ can be represented as 
\begin{equation}
    T_{mt} = \frac{D_m}{c^\textrm{C}_{mt} \left( \boldsymbol{y}_{mt}, \boldsymbol{X}_{\left( -m \right) t}, \boldsymbol{Y}_{\left( -m \right) t} \right)}. 
\end{equation}

\subsection{Digital Network Twin Model}
As a virtual representation of the physical network, the DNT should have the same status and the network management strategy with the physical network, even when some BSs do not transmit the physical network information. We first define a vector $\boldsymbol{s}_t$ to represent the status of the physical network at time slot $t$. In our considered network, the positions of all $U$ users are used to describe the status of the physical network. Since each BS has its service area, it can only serve a subset $\mathcal{U}^m_t$ of users that are located in its coverage. Thus, the entire physical network status $\boldsymbol{s}_t$ is partially observed by each BS $m$. We use $\boldsymbol{s}^m_t = \left[ \boldsymbol{l}_{u,t}^\textrm{U} \right]_{u \in \mathcal{U}^m_t}$ to denote physical network status observed by BS $m$ at time slot $t$. Then, we have $\boldsymbol{s}_t = \left[ \boldsymbol{s}^1_t, ..., \boldsymbol{s}^M_t \right]$. Accordingly, the DNT status at time slot $t$ can be defined as $\overline{\boldsymbol{s}}_t = \left[ \overline{\boldsymbol{s}}^1_t, ..., \overline{\boldsymbol{s}}^M_t \right]$, with $\overline{\boldsymbol{s}}^m_t$ being a mapping of the physical network status $\boldsymbol{s}^m_t$. Given $\boldsymbol{y}_{mt}$ and $\boldsymbol{s}^m_t$, $ \overline{\boldsymbol{s}}^m_t$ is given by 
\begin{equation}\label{eq:s_dnt}
     \overline{\boldsymbol{s}}^m_t\left( \boldsymbol{y}_{mt} \right)= \left \{
        \begin{aligned}
            &\boldsymbol{s}^m_t, &&\text{if}~\sum_{n=1}^N y_{mt,n} = 1,\\
            &\boldsymbol{\hat s}^m_t &&\text{if}~\sum_{n=1}^N y_{mt,n} = 0, \\
        \end{aligned}
    \right.
\end{equation}
where $\boldsymbol{\hat s}^m_t = \left[ \hat{\boldsymbol{l}}_{u,t}^\textrm{U} \right]_{u \in \mathcal{U}^m_t}$ is the status estimated by the DNT at time slot $t$. From (\ref{eq:s_dnt}), we can see that if each BS $m$ allocates an RB to transmit its partially observed physical network information to the cloud server (i.e., $\sum_{n=1}^N y_{mt,n} = 1$), the status of the DNT and the status of the physical network is similar, since the DNT can obtain the physical network status from BS $m$ directly. Otherwise, if BS $m$ does not allocate any RBs to the cloud server, the DNT must estimate the physical network status of BS $m$. When the physical network status and the DNT status are identical (i.e., $\boldsymbol{s}_t = \overline{\boldsymbol{s}}_t$), we consider the DNT synchronizes with the physical network. 

\subsection{Problem Formulation}
Given the our designed system model, we next describe our considered optimization problem. Our goal is to maximize the sum of the data rates of all $U$ users, while guaranteeing the synchronization between the DNT and the physical network over a set $\mathcal{T}$ of $T$ time slots. 
The optimization problem includes optimizing the RB allocation matrix $\boldsymbol{X}_{t}$ and physical network information transmission matrix $\boldsymbol{y}_{t}$. The optimization problem can be formulated as 
\begin{subequations}\label{eq:problem}
    \begin{equation}\tag{\theequation}\label{eq:opt_prob}
    \begin{split}
        \max_{\{ \boldsymbol{X}_t, \boldsymbol{Y}_{t} \}_{t \in \mathcal{T}}} & \sum_{t=1}^T \left( - \frac{1 - \epsilon}{U} \lVert \boldsymbol{s}_t - \overline{\boldsymbol{s}}_t \rVert_2^2 \right. \\
        & \left. + \epsilon \sum_{m=1}^M \sum_{u=1}^U c_{umt} \left( \boldsymbol{x}_{umt}, \boldsymbol{X}_{\left( -m \right) t}, \boldsymbol{Y}_{\left( -m \right) t} \right) \right), 
    \end{split}
    \end{equation}
    \begin{flalign}
        &&\text{s.t. } &
        x_{umt,n} \in \{ 0,1 \}, m \in \mathcal{M}, n \in \mathcal{N}, u \in \mathcal{U}, t \in \mathcal{T}, && \label{eq:const1} \\
        && & \sum_{m=1}^M \sum_{n=1}^N x_{umt,n} \leq 1, u \in \mathcal{U}, t \in \mathcal{T}, && \label{eq:const2} \\
        && & \sum_{u=1}^U x_{umt,n} \leq 1, m \in \mathcal{M}, n \in \mathcal{N}, t \in \mathcal{T}, && \label{eq:const3} \\
        && & y_{mt,n} \in \{ 0,1 \}, m \in \mathcal{M}, n \in \mathcal{N}, t \in \mathcal{T}, && \label{eq:const4} \\
        && &\sum_{n=1}^N y_{mt,n} \leq 1, m \in \mathcal{M}, t \in \mathcal{T}, && \label{eq:const5} \\
        && &\sum_{u=1}^U x_{umt,n} + y_{mt,n} \leq 1, m \in \mathcal{M}, n \in \mathcal{N}, t \in \mathcal{T}, && \label{eq:const6} \\
        && & T_{mt} \leq \alpha, m \in \mathcal{M}, t \in \mathcal{T}, && \label{eq:const7} 
    \end{flalign}
\end{subequations}
where $\epsilon \in \left( 0,1 \right)$ is a weight parameter, and $\alpha$ is a threshold of the transmission delay of each BS transmitting the physical network information to the cloud server. In problem (\ref{eq:problem}), constraints (\ref{eq:const1}) and (\ref{eq:const2}) imply that each user can be served by only one BS and can occupy only one RB of this BS. Constraints (\ref{eq:const1}) and (\ref{eq:const3}) implies that each RB can be occupied by only one user. Constraints (\ref{eq:const4}) and (\ref{eq:const5}) imply that each BS $m$ can allocate only one RB for physical network information transmission. (\ref{eq:const6}) implies that each RB $n$ of BS $m$ can be used either for serving a user or for physical network information transmission. Constraint (\ref{eq:const7}) is a constraint of the delay of each BS $m$ transmitting physical network information to the cloud server.

Problem (\ref{eq:problem}) is challenging to solve by conventional optimization methods due to the following reasons. First, each BS $m$ cannot fully observe the entire physical network status $\boldsymbol{s}_t$, making it difficult to find the RB allocation policy which optimizes the data rates of all users and the DNT synchronization using conventional optimization methods \cite{lovejoy1991survey}. Second, each BS cannot know the future DNT status as it depends on the observations of the physical network, the decisions of physical network information transmission by other BSs, as well as the estimation accuracy of the DNT. Third, our considered system is dynamic due to user movements. Hence, the status of the physical network and the DNT will change over time. This makes it complicated to solve problem (\ref{eq:problem}) using conventional optimization methods because we need to resolve the problem when the physical network statuses change. To solve problem (\ref{eq:problem}), we propose a value decomposition based MARL method which enables each BS to find its optimal RB allocation policy and synchronization policy, while considering the RB allocation and synchronization policy of others, thus optimizing the data rate of all users in the physical network and keeping an accurate synchronization between the physical network and the DNT. 

\section{Solution of Problem}\label{se:solution}
To solve problem (\ref{eq:problem}), we next introduce our proposed ML based method that combines GRUs with the value decomposition network (VDN). In our designed method, the GRU is implemented by the DNT to estimate the status of the physical network. The VDN is used by each BS to determine its associated users, as well as allocating RBs to serve the associated users and to transmit physical network information to the cloud server. Meanwhile, the use of GRUs enables the designed method to use historical user mobility data for future user movement prediction without the reliance on user mobility models. Compared to traditional recurrent neural networks (RNNs) methods for user mobility prediction, the GRUs can effectively fix the gradient vanishing problem thus improving prediction accuracy and efficiency. Compared to traditional multi-agent reinforcement learning (MARL) methods (i.e., IQL \cite{tan1993multi}), the VDN allows each BS to decide its associated users and whether to transmit the physical network information based on its partial observation, but optimize the performance of the entire system (i.e., the sum of data rates of all users, and the synchronization between the DNT and the physical network). 
Next, we first introduce the GRU for physical network status estimation. Then, we introduce the components of the VDN based MARL framework. Finally, we explain the procedure of using our proposed method to solve problem (\ref{eq:problem}). 

\subsection{GRUs for Physical Network Status Estimation} 
We first introduce the GRUs for physical network status estimation. The GRU-based estimation model is deployed at the cloud server, and its output is an estimated status of the physical network of all BSs. Accurate status estimation enables the DNT to simulate the physical network, even when the BSs do not transmit their partially observed physical network information to the cloud server. The GRU-based estimation model consists of three components: 1) input, 2) output, and 3) the GRU model, which are introduced as follows. 

\subsubsection{Input} The input of the GRU-based estimation model consists of the most recent $K$ statuses of the entire DNT, which can be represented as $\overline{\boldsymbol{S}}_t = \left[ \overline{\boldsymbol{s}}_{t-K+1}, ..., \overline{\boldsymbol{s}}_{t} \right]$. 

\subsubsection{Output} The output of the GRU-based estimation model is the estimated status $\hat{\boldsymbol{s}}_{t+1}$. 

\subsubsection{The GRU Model} The GRU model is used to approximate the relationship between the input $\overline{\boldsymbol{S}}_t$ and the output $\hat{\boldsymbol{s}}_{t+1}$. A GRU model consists of an input layer, a hidden layer, and an output layer. The hidden states $\boldsymbol{h}_t$ of the units in the hidden layer at time slot $t$ are used to store the information related to the previous states from time slots $t-K+1$ to $t$. At each time slot $\tau$ from $t-K+1$ to $t$, the hidden states $\boldsymbol{h}_{\tau}$ is determined by the reset gate $\boldsymbol{r}^{\textrm{G}}_{\tau}$ and the update gate $\boldsymbol{z}^{\textrm{G}}_{\tau}$. The reset gate $\boldsymbol{r}^{\textrm{G}}_{\tau}$ determines how much of the previous hidden states $\boldsymbol{h}_{\tau-1}$ should influence the new candidate state, effectively allowing the model to retain relevant portions of past information. The reset gate at time slot $\tau$ is  
\begin{equation}\label{eq:reset}
    \boldsymbol{r}^{\textrm{G}}_{\tau} = \phi \left( \boldsymbol{W}^{\textrm{r}} \overline{\boldsymbol{s}}_{\tau} + \boldsymbol{U}^{\textrm{r}} \boldsymbol{h}_{\tau-1} \right), 
\end{equation}
where $\phi \left( \cdot \right)$ is the sigmoid function, 
and $\boldsymbol{W}^{\textrm{r}} \in \mathbb{R}^{N^{\textrm{h}} \times 2U}$ and $\boldsymbol{U}^{\textrm{r}} \in \mathbb{R}^{N^{\textrm{h}} \times N^{\textrm{h}}}$ are the weight matrices of the reset gate with $N^{\textrm{h}}$ being the number of the units in the hidden layer. Given the reset gate $\boldsymbol{r}^\textrm{G}_{\tau}$, the new candidate hidden state $\tilde{\boldsymbol{h}}_{\tau}$ is 
\begin{equation}\label{eq:candidate}
    \tilde{\boldsymbol{h}}_{\tau} = \textrm{tanh} \left( \boldsymbol{W}^{\tilde{h}} \overline{\boldsymbol{s}}_{\tau} + \boldsymbol{U}^{\tilde{h}} \left( \boldsymbol{h}_{\tau-1} \odot \boldsymbol{r}^{\textrm{G}}_{\tau} \right) \right), 
\end{equation}
where $\textrm{tanh} \left( \cdot \right)$ is the hyperbolic tangent function, $\boldsymbol{W}^{\tilde{h}} \in \mathbb{R}^{N^{\textrm{h}} \times 2U}$ and $\boldsymbol{U}^{\textrm{r}} \in \mathbb{R}^{N^{\textrm{h}} \times N^{\textrm{h}}}$ are the weight matrices of the hidden states, and $\odot$ is an element-wise multiplication. \\
The update gate $\boldsymbol{z}^\textrm{G}_{\tau}$ controls how much of the past information is retained and how much of the new input is incorporated into the current hidden states. The update of $\boldsymbol{z}^{\textrm{G}}_{\tau}$ is given by 
\begin{equation}\label{eq:update}
    \boldsymbol{z}^{\textrm{G}}_{\tau} = \phi \left( \boldsymbol{W}^{\textrm{z}} \overline{\boldsymbol{s}}_{\tau} + \boldsymbol{U}^{\textrm{z}} \boldsymbol{h}_{\tau-1} \right), 
\end{equation}
where $\boldsymbol{W}^{\textrm{z}} \in \mathbb{R}^{N^{\textrm{h}} \times 2U}$ and $\boldsymbol{U}^{\textrm{z}} \in \mathbb{R}^{N^{\textrm{h}} \times N^{\textrm{h}}}$ are the weight matrices of the update gate. Given (\ref{eq:reset}), (\ref{eq:candidate}), and (\ref{eq:update}), the hidden state $\boldsymbol{h}_{\tau}$ can be updated by 
\begin{equation}\label{eq:hidden}
    \boldsymbol{h}_{\tau} = \left( \mathbf{1} - \boldsymbol{z}^{\textrm{G}}_{\tau} \right) \odot \tilde{\boldsymbol{h}}_{\tau} + \boldsymbol{z}^{\textrm{G}}_{\tau} \odot \boldsymbol{h}_{\tau-1}. 
\end{equation}
By Given the hidden state $\boldsymbol{h}_t$, the output of the GRU model estimates the state of the physical network at time slot $t+1$ as follows: 
\begin{equation}\label{eq:s_pred}
    \hat{\boldsymbol{s}}_{t+1} = \boldsymbol{W}^{\textrm{o}} \boldsymbol{h}_t, 
\end{equation}
where $\boldsymbol{W}^{\textrm{o}}$ is the output weight matrix. 

\subsection{GRU Training}
The GRU model is used to approximate the relationship between the input $\overline{\boldsymbol{S}}_t$ and the output $\hat{\boldsymbol{s}}_{t+1}$. Hence, the loss function of the GRU-based estimation model is given by
\begin{equation}\label{eq:loss_gru}
    J_{G} = \frac{1}{2U} \lVert \hat{\boldsymbol{s}}_{t+1} - \boldsymbol{s}_{t+1} \rVert^2.   
\end{equation}
To train our proposed GRU-based estimation model, we use a mini-batch stochastic gradient decent (SGD) method to minimize (\ref{eq:loss_gru}). The update rule of the parameter matrices in the GRU model is defined as  
\begin{equation}
    \begin{split}
        \boldsymbol{W}^i \leftarrow \boldsymbol{W}^i - \lambda_G \nabla_{\boldsymbol{W}^i} J_{G}, \\
        \boldsymbol{U}^j \leftarrow \boldsymbol{U}^j - \lambda_G \nabla_{\boldsymbol{U}^j} J_{G}, 
    \end{split}
\end{equation}
where $i \in \{ \textrm{r}, \textrm{z}, \tilde{h}, \textrm{o} \}$, $j \in \{ \textrm{r}, \textrm{z}, \tilde{h} \}$, $\lambda_G$ is the learning rate, and $\nabla_{\boldsymbol{W}^i} J_{G}$ and $\nabla_{\boldsymbol{U}^j} J_{G}$ are the gradient of the loss function with respect to $\boldsymbol{W}^i$ and $\boldsymbol{U}^j$. \cite{9210812}

\subsection{Components of the VDN Algorithm}
Given the status predicted by the DNT, we use a VDN based method to solve problem (\ref{eq:problem}). Next, we first introduce the components of the VDN algorithm. The VDN algorithm consists of seven components: 1) agents, 2) states, 3) actions, 4) policy, 5) reward function, 6) local Q function, and 7) global Q function, which are introduced as follows \cite{10004943}: 

\subsubsection{Agents} The agents are $M$ BSs. Each BS implements an unique deep Q network (DQN), observes a part of the physical network, and made decisions for user-BS association and whether to transmit the partially observed physical network information to the cloud server. The DQNs of BSs are not controlled by any central nodes. 

\subsubsection{States} The local state of each BS describes the positions of users within its service area, while the global state of all BSs describes the positions of all users in the physical network. Hence, we use $\boldsymbol{s}^m_t$ to denote the local state of BS $m$ at time slot $t$, and $\boldsymbol{s}_t$ to denote the global state.

\subsubsection{Actions} For each BS $m$, the action describes: 1) whether BS $m$ transmits its partially observed physical network information to the cloud server at time slot $t$ and which RB will be used for the transmission, and 2) the subset of users that BS $m$ serves. Here, 1) is expressed by $| \boldsymbol{y}_{mt} |$, and 2) can be expressed by a vector $\boldsymbol{z}_{mt} = \left[ z_{mt,1}, ..., z_{mt,U} \right]$ with $z_{mt,u} \in \{ 0,1 \}$ indicating whether BS $m$ serves user $u$.
Hence, the action of BS $m$ at time slot $t$ is $\boldsymbol{a}^m_t = \left[ | \boldsymbol{y}_{mt} |, \boldsymbol{z}_{mt} \right]$, $\boldsymbol{a}^m_t \in \mathcal{A}$, where $\mathcal{A}$ is the action space. The joint action of all BSs at time slot $t$ is $\boldsymbol{a}_t = \left[ \boldsymbol{a}^1_t, ..., \boldsymbol{a}^M_t \right]$. 

\subsubsection{Policy} The policy $\boldsymbol{\pi}^m \left( \boldsymbol{a}^m_t | \boldsymbol{s}^m_t \right)$ of each BS $m$ is the conditional probability that BS $m$ chooses action $\boldsymbol{a}^m_t$ given its current state $\boldsymbol{s}^m_t$. 

\subsubsection{Reward Function} The team reward $r \left( \boldsymbol{s}_t, \boldsymbol{a}_t \right)$
captures the total reward of all BSs taking a joint action $\boldsymbol{a}_t$ under a global state $\boldsymbol{s}_t$. The team reward function of all BSs at each time slot $t$ is the weighted sum of: 1) the sum of data rates of all users, and 2) the similarity between the physical network status $\boldsymbol{s}_t$ and the DNT status $\overline{\boldsymbol{s}}_t$, which can be expressed by: 
\begin{equation}\label{eq:global_r}
     r \left( \boldsymbol{s}_t, \boldsymbol{a}_t \right) = \left \{
     \begin{alignedat}{2}
        &- \frac{1 - \epsilon}{U} \lVert \boldsymbol{s}_t - \overline{\boldsymbol{s}}_t \rVert_2^2 + \sum_{m=1}^M \sum_{u=1}^U \epsilon c_{umt}, \\
        & \hspace{13em} \text{if}~ \forall u,\xi_{ut} = 1, \\
        &\sum_{u=1}^U \mathbbm{1}_{\{ \xi_{ut} > 1 \}} \xi_{ut} \rho, \hspace{5.2em}\text{otherwise}. 
    \end{alignedat}
    \right.
\end{equation}
where $\xi_{ut} = \sum_{m=1}^M z_{mt,u}$ is the total number of BSs that serve user $u$ at time slot $t$, and $\rho < 0$ is a penalty for one user obtaining RBs from multiple BSs. From (\ref{eq:global_r}), we see that the sum of data rates of all users depends on the physical network information transmissions. The sum of data rates of all users cannot directly obtained using the joint action $\boldsymbol{a}_t$ since $\boldsymbol{a}^m_t$ only determines the users connect to BS $m$ (i.e., $\boldsymbol{z}_{mt}$) but does not consider how to allocate RBs to these users (i.e., $\boldsymbol{x}_{umt}$). Next, we introduce how to calculate the team reward of all BSs given an action $\boldsymbol{a}_t$. In particular, given an action $\boldsymbol{a}_t$, problem (\ref{eq:problem}) can be divided into $M$ suboptimization problems that can be solved by each BS iteratively. Hence, the suboptimization of BS $m$ is given by 
\begin{subequations}\label{eq:problem_2}
    \begin{equation}\tag{\theequation}\label{eq:sim_obj}
        \begin{split}
            \max_{\substack{\{ \boldsymbol{x}_{ijt} \}_{\left( i,j \right) \in \mathcal{Z}^m_t, t \in \mathcal{T}} \\ \{ \boldsymbol{y}_{kt} \}_{k \in \{ 1,...,m \}} }} & \sum_{t=1}^T \Bigg( - \frac{1 - \epsilon}{U} \lVert \boldsymbol{s}^m_t - \overline{\boldsymbol{s}}^m_t \rVert_2^2  \\
            & + \sum_{u \in \mathcal{U}^m_t} \epsilon c_{umt} \left( \boldsymbol{x}_{umt}, \left[ \boldsymbol{x}_{ijt} \right], \left[ \boldsymbol{y}_{kt} \right] \right) \Bigg), 
        \end{split}
    \end{equation}
    \begin{flalign}
        &&\text{s.t. } & x_{umt,n} \in \{ 0,1 \}, n \in \mathcal{N}, t \in \mathcal{T}, && \label{eq:sim_const1} \\
        && &\sum_{n=1}^N x_{umt,n} \leq 1, u \in \mathcal{U}^m_t, t \in \mathcal{T}, && \label{eq:sim_const2} \\
        && &\sum_{u \in \mathcal{U}^m_t} x_{umt,n} \leq 1, t \in \mathcal{T}, && \label{eq:sim_const3} \\
        && & \sum_{u \in \mathcal{U}^m_t} x_{umt,n} + y_{mt,n} \leq 1, n \in \mathcal{N}, t \in \mathcal{T}. && \label{eq:sim_const5}
    \end{flalign}
\end{subequations}
where $\mathcal{Z}^m_t = \{ \left( i,j \right) \mid z_{jt,i}=1, j < m\}$. 
Since the associated users $\boldsymbol{z}_{mt}$ of BS $m$ have determined, only the RB allocation vectors $ \boldsymbol{x}_{umt}$ need to be optimized. In (\ref{eq:problem_2}), since the objective function (\ref{eq:sim_obj}) is linear, the constraints are non-linear, and the optimization variables are integers, we can use an iterative Hungarian algorithm \cite{jonker1986improving} to solve the suboptimization problems for all BSs. Compared to standard convex optimization algorithms, using the Hungarian algorithm to solve problem (\ref{eq:problem_2}) does not require to calculate gradients of each variables nor dynamically adjusting the step size for convergence. 
To implement Hungarian algorithm for finding the optimal RB allocation for all BSs, each BS must transform its suboptimization problem (\ref{eq:problem_2}) into a bipartite graph matching problem. Hence, a bipartite graph $\mathcal{G}^m = \left< \mathcal{U}^m \times \mathcal{N}^m, \mathcal{E}^m \right>$ must be constructed for each BS $m$, where $\mathcal{U}^m$ is the set of vertices represent the users associated with BS $m$, $\mathcal{N}^m$ is the set of vertices represent the RBs that are not occupied by physical network information transmission such that can be allocated to each associated user of BS $m$, and $\mathcal{E}^m$ is the set of edges that connect vertices from $\mathcal{U}^m$ and $\mathcal{N}^m$. Let $e^m_{un} \in \mathcal{E}^m$ be the edge connecting vertex $u$ in $\mathcal{U}^m$ and vertex $n$ in $\mathcal{N}^m$ with $e^m_{un} \in \{ 0,1 \}$, where $e^m_{un} = 1$ indicates that BS $m$ allocates RB $n$ to user $u$ (i.e., $x_{umt,n} = 1$), and $e^m_{un} = 0$ otherwise. 
Let $\mathcal{X}^m_t$ be a subset of $\mathcal{E}^m$ at time slot $t$, in which two edges can neither share a common vertex in $\mathcal{N}^m$, nor in $\mathcal{U}^m$. Therefore, in $\mathcal{X}^m_t$, each RB $n$ can only be allocated to one associated user (constraint (\ref{eq:const3}) and (\ref{eq:const6}) are satisfied), and each user can only occupy one RB (constraint (\ref{eq:sim_const2}) is satisfied). The weight of edge $e^m_{un}$ is 
\begin{equation}\label{eq:weight}
    \psi^m_{un} = \left \{
     \begin{alignedat}{2}
        &c_{umt}, \hspace{3em} && \text{if}~ e^m_{ut} = 1, \\
        &0, && \text{otherwise}. 
    \end{alignedat}
    \right.
\end{equation}
Given the formulated bipartite matching problem of BS $m$, we can find an optimal RB allocation scheme for BS $m$ when the RB allocation schemes of other BSs are given. Hence, to optimize the RB allocation schemes of all BSs, one must iteratively update the RB allocation scheme of each BS using the Hungarian algorithm \cite{9210812}. Algorithm \ref{alg:algorithm1} summarizes the procedure of using the Hungarian algorithm to find RB allocation schemes for all BSs. 
\begin{algorithm}[!t]
    \small
    \caption{\small The Process of Using the Hungarian algorithm to find RB allocation schemes for all BSs}
    \label{alg:algorithm1}
    \begin{algorithmic}
        \FOR {each environment step $t$}
            \IF {$m = 1$}
                \STATE \textbf{BS 1: } 
                \STATE Use its Q network to determine $\boldsymbol{a}^1_t$. 
                \STATE Based on $\boldsymbol{a}^1_t$, use the Hungarian algorithm to determine the optimal $\mathcal{X}^{1*}_t$. 
                \STATE Use $\mathcal{X}^{1*}_t$ to find the optimal RB allocation policy $\left[ \boldsymbol{x}_{u1t} \right]_{u \in \mathcal{U}^1_t}$. 
                \STATE Send $\left[ \boldsymbol{x}_{u1t} \right]_{u \in \mathcal{U}^1_t}$ to BS 2. 
            \ELSE
                \FOR {$m = 2 \to M$}
                \STATE \textbf{BS $m$: }
                \STATE Use its Q network to determine $\boldsymbol{a}^m_t$. 
                \STATE Receive $\left[ \boldsymbol{x}_{ijt} \right]_{i,j \in \mathcal{Z}^m_t}$ from BS $m-1$. 
                \STATE Based on $\boldsymbol{a}^m_t$ and $\left[ \boldsymbol{x}_{ijt} \right]_{i,j \in \mathcal{Z}^m_t}$, use the Hungarian algorithm to determine the optimal $\mathcal{X}^{m*}_t$. 
                \STATE Use $\mathcal{X}^{m*}_t$ to find the optimal RB allocation policy $\left[ \boldsymbol{x}_{umt} \right]_{u \in \mathcal{U}^m_t}$. 
                \STATE Send $\left[ \boldsymbol{x}_{ijt} \right]_{i,j \in \mathcal{Z}^{m+1}_t}$ to BS $m+1$. 
                \ENDFOR
            \ENDIF
        \ENDFOR
    \end{algorithmic}
\end{algorithm}

\subsubsection{Local Q Function} The local Q function $Q_m \left( \boldsymbol{s}^m_t , \boldsymbol{a}^m_t \right)$ of BS $m$ is used to estimate the expected reward of the BS taking action $\boldsymbol{a}^m_t$ under a local state $\boldsymbol{s}^m_t$. 
Each BS uses a GRU, referred to as the Q network, parameterized by $\boldsymbol{\theta}_m$, to approximate its Q function. The output of the Q network represents the Q values under a given state and different actions. Therefore, the local Q function approximated by the Q network with parameters $\boldsymbol{\theta}_m$ is expressed as $Q_{\boldsymbol{\theta}_m} \left( \boldsymbol{s}^m_t , \boldsymbol{a}^m_t \right)$. 

\subsubsection{Global Q Function} The global Q function $Q \left( \boldsymbol{s}_t, \boldsymbol{a}_t \right)$ is defined as a Q function that estimates the reward of all BSs taking action $\boldsymbol{a}_t$ under a global state $\boldsymbol{s}_t$. In our algorithm, we cannot obtain the global Q function since we do not have a centralized DQN to approximate the global Q function. However, if we want the BSs to maximize the team reward, we need to use global Q value to update the DQN model of each BS. During the training process, we will introduce an approximate a global Q function $Q_{tot} \left( \boldsymbol{s}_t, \boldsymbol{a}_t \right)$ that is estimated using local Q values. 

\subsection{Training of the VDN}
Next, we introduce the training process of the VDN algorithm. Since we consider a MARL and each BS will not share its reward, actions, and states with other BSs, we cannot obtain the values of the global Q values. However, 
we need the global Q function 
to search for polices that optimize $r \left( \boldsymbol{s}_t, \boldsymbol{a}_t \right)$. To obtain the global Q values, we assume that a global Q value can be additively decomposed into local Q values of all BSs, which is given by
\begin{equation}\label{eq:global_q}
    Q_{tot} \left( \boldsymbol{s}_t , \boldsymbol{a}_t \right) = \sum_{m=1}^M Q_{\boldsymbol{\theta}_m} \left( \boldsymbol{s}^m_t , \boldsymbol{a}^m_t \right). 
\end{equation}
Using (\ref{eq:global_q}), each BS can update their Q networks in a distributed manner since the BSs do not need to know $Q \left( \boldsymbol{s}_t, \boldsymbol{a}_t \right)$, we introduce two key techniques to improve the training process of the designed MARL method. First, we use a neural network $\tilde{\boldsymbol{\theta}}_m$ whose structure is similar to that of Q network $\boldsymbol{\theta}_m$, to approximate the target Q function $\tilde{Q}_{\tilde{\boldsymbol{\theta}}_m
} \left( \boldsymbol{s}^m_t, \boldsymbol{a}^m_t \right)$ so as to define the loss function. Different from $\boldsymbol{\theta}_m$, the target Q network $\tilde{\boldsymbol{\theta}}_m$ updates and hence, the update speed of $\tilde{\boldsymbol{\theta}}_m$ is much slower compared to that of $\boldsymbol{\theta}_m$. For instance, the Q network $\boldsymbol{\theta}_m$ is updated at each training epoch while $\tilde{\boldsymbol{\theta}}_m$ is updated every several training epochs. 
The target value is expressed as 
\begin{equation}\label{eq:target}
    \tilde{y}_m = r^m \left( \boldsymbol{s}^m_t, \boldsymbol{a}^m_t \right) + \gamma \max_{\boldsymbol{a}^m_{t+1}} \tilde{Q}_{\tilde{\boldsymbol{\theta}}_m} \left( \boldsymbol{s}^m_{t+1}, \boldsymbol{a}^m_{t+1} \right), 
\end{equation}
where $\gamma$ is the discount factor. Similar to $Q \left( \boldsymbol{s}_t, \boldsymbol{a}_t \right)$, the target value of the global Q function is  
\begin{equation}
    \tilde{y} = r \left( \boldsymbol{s}_t, \boldsymbol{a}_t \right) + \gamma \max_{\boldsymbol{a}_{t+1}} \sum_{m=1}^M \tilde{Q}_{\tilde{\boldsymbol{\theta}}_m} \left( \boldsymbol{s}^m_{t+1}, \boldsymbol{a}^m_{t+1} \right). 
\end{equation}
Here, $\tilde{y}$ will be used to calculate the training loss of our designed VDN algorithm \cite{mnih2015human}. Second, we use experience replay technique to store past states, actions, and rewards in a buffer \cite{lin1992reinforcement}. The experience replay technique allows the model to learn from a diverse set of states, actions, and rewards sampled from the buffer, improving training stability and the performance of the model by breaking the correlation between consecutive states. To introduce the experience replay technique, we first define a transition as $\left( \boldsymbol{s}_t, \boldsymbol{a}_t, r \left( \boldsymbol{s}_t, \boldsymbol{a}_t \right), \boldsymbol{s}_{t+1} \right)$ and assume that each BS will collect $D$ transitions per training epoch $g$. These transitions are then stored in a replay memory, which also contains transitions collected from previous training epochs. In each training epoch $g$, a set $\mathcal{D}_g$ of transitions is sampled from the replay memory to update the local Q functions. Accordingly, the loss function of the VDN algorithm is defined as \cite{mnih2015human}
\begin{equation}\label{eq:GVDN}
\begin{split}
    J_Q & = \mathbb{E}_{\left( \boldsymbol{s}_t, \boldsymbol{a}_t, r, \boldsymbol{s}_{t+1} \right) \sim \mathcal{D}_g} \Bigg[ \Bigg( \Big( r \left( \boldsymbol{s}_t, \boldsymbol{a}_t \right) \\
    & \quad + \gamma \max_{\boldsymbol{a}_{t+1}} \sum_{m=1}^M \tilde{Q}_{\tilde{\boldsymbol{\theta}}_m} \left( \boldsymbol{s}^m_{t+1}, \boldsymbol{a}^m_{t+1} \right) \Big) - Q_{tot} \left( \boldsymbol{s}_t , \boldsymbol{a}_t \right) \Bigg)^2 \Bigg] \\
    & = \mathbb{E}_{\left( \boldsymbol{s}_t, \boldsymbol{a}_t, r, \boldsymbol{s}_{t+1} \right) \sim \mathcal{D}_g} \left[ \Big( \tilde{y} - Q_{tot} \left( \boldsymbol{s}_t , \boldsymbol{a}_t \right) \Big)^2 \right]. 
\end{split}
\end{equation}
Given the loss function, the parameters $\boldsymbol{\theta}_m$ of each BS $m$ is updated by a SGD method as follows: 
\begin{equation}\label{eq:network_param_update}
    \boldsymbol{\theta}_m \leftarrow \boldsymbol{\theta}_m - \lambda_Q \nabla_{\boldsymbol{\theta}_m} J_Q, 
\end{equation}
where $\lambda_Q$ is the learning rate, and $\nabla_{\boldsymbol{\theta}_m} J_Q$ is the gradient of $J_Q$ with respect to $\boldsymbol{\theta}_m$, which is given by 
\begin{equation}\label{eq:gradient}
\begin{split}
    &\nabla_{\boldsymbol{\theta}_m} J_Q \\
    & = \mathbb{E}_{\left( \boldsymbol{s}_t, \boldsymbol{a}_t, r, \boldsymbol{s}_{t+1} \right) \sim \mathcal{D}_g} \left[ \left( \tilde{y} - Q_{tot} \left( \boldsymbol{s}_t , \boldsymbol{a}_t \right) \right) \nabla_{\boldsymbol{\theta}_m} Q_{tot} \left( \boldsymbol{s}_t , \boldsymbol{a}_t \right) \right] \\
    & = \mathbb{E}_{\left( \boldsymbol{s}_t, \boldsymbol{a}_t, r, \boldsymbol{s}_{t+1} \right) \sim \mathcal{D}_g} \left[ \left( \tilde{y} - Q_{tot} \left( \boldsymbol{s}_t , \boldsymbol{a}_t \right) \right) \nabla_{\boldsymbol{\theta}_m} Q_{\boldsymbol{\theta}_m} \left( \boldsymbol{s}^m_t , \boldsymbol{a}^m_t \right) \right]. 
\end{split}
\end{equation}
The specific training procedure of the VDN algorithm is summarized in Algorithm \ref{alg:algorithm2}. 
\begin{algorithm}[!t]
    \small
    \caption{\small The Training Procedure if the VDN Algorithm }
    \label{alg:algorithm2}
    \begin{algorithmic}
        \REQUIRE Discount factor $\gamma$; learning rate $\lambda_Q$; training epoch $G$; environment steps per training epoch $D$; minibatch size $| \mathcal{D}_g |$.  
        \ENSURE Q network parameters $\boldsymbol{\theta}_1, ..., \boldsymbol{\theta}_M$ generated randomly.  
        \FOR {$g = 1 \to G$}
            \STATE \textbf{Data collection stage: }
            \FOR {each time step $t$}
                \STATE \underline{The DNT} estimates $\hat{\boldsymbol{s}}_t$ based on (\ref{eq:reset}) - (\ref{eq:s_pred}). 
                \FOR {\underline{each BS $m \in \mathcal{M}$}}
                    \STATE Record local state $\boldsymbol{s}^m_t$. 
                    \STATE Choose action $\boldsymbol{a}^m_t$ based on the current policy $\boldsymbol{\pi}^m \left( \boldsymbol{a}^m_t | \boldsymbol{s}^m_t \right)$. 
                    \STATE Determine the RB allocation scheme by using the Hungarian algorithm. 
                    \IF {$| \boldsymbol{y}_{mt} | = 1$}
                        \STATE Synchronize $\boldsymbol{s}^m_t$ with the DNT. 
                    \ENDIF
                \ENDFOR 
                \STATE \underline{The DNT} calculate $\overline{\boldsymbol{s}}_t$ based on (\ref{eq:s_dnt}).
                \STATE \underline{Each BS $m$} receives the team reward $r \left( \boldsymbol{s}_t, \boldsymbol{a}_t \right)$ based on (\ref{eq:global_r}). 
                \STATE \underline{Each BS $m$} stores the transition into the reply memory. 
            \ENDFOR
            \STATE \textbf{Training stage: }
            \FOR {\underline{each BS $m \in \mathcal{M}$} in parallel}
                \STATE Sample a set $\mathcal{D}_g$ of transitions from the reply memory. 
                \FOR {$d = 1 \to | \mathcal{D}_g |$}
                    \STATE Calculate the approximated global Q value $Q_{tot} \left( \boldsymbol{s}_d, \boldsymbol{a}_d \right)$ based on (\ref{eq:global_q}). 
                \ENDFOR
                \STATE Update $\boldsymbol{\theta}_1, ..., \boldsymbol{\theta}_M$ based on (\ref{eq:network_param_update}). 
            \ENDFOR
        \ENDFOR
    \end{algorithmic}
\end{algorithm}

\section{Complexity and Convergence of the Proposed Algorithm}\label{se:analysis}
In this section, we analyze the complexity and the convergence of the proposed method for DNT synchronization and user data rate optimization in the physical network. 


\subsection{Complexity Analysis}
We first analyze the complexity of our proposed method, which includes: 1) the DNT state predictions, 2) each BS determining its associated users and whether to send its partial observed physical network state to the DNT, and 3) RB allocation schemes. Thus, the complexity of the proposed method is determined by: 1) the complexity of the GRU-based predictive model, 2) the complexity of the Q networks of the VDN algorithm, and 3) the complexity of the Hungarian algorithm. Next, we will introduce the complexity of these components. 

Using the result from \cite{NIPS2017_3f5ee243, 9140367}, we know that the computational complexity of the GRU-based predictive model is $\mathcal{O} \left( K N^{\textrm{h}} \left( U + N^{\textrm{h}} \right) \right)$, and the computational complexity of each Q network at a BS is $\mathcal{O} \left( \theta^{\textrm{h}} \left( U + \theta^{\textrm{h}} + \left| \mathcal{A} \right| \right) \right)$, where $\theta^{\textrm{h}}$ is the size of the hidden states of the Q network. Hence, the complexity of all BS's Q networks is $\mathcal{O} \left( M \theta^{\textrm{h}} \left( U + \theta^{\textrm{h}} + \left| \mathcal{A} \right| \right) \right)$. Since the GRU-based predictive model is trained at the cloud server, and the Q networks are trained at BSs, both possessing sufficient computational power, the training overhead of these models can be ignored \cite{9140367}. 

We now analyze the computational complexity for each BS $m$ using the Hungarian algorithm to determine its RB allocation scheme. First, each BS $m$ requires $\left| \mathcal{U}^m \right| N$ iterations to calculate the data rate of each associated user over each RB. Then, the Hungarian algorithm updates $\psi^m_{un}$ to find the optimal matching set $\mathcal{X}^{m*}$. The worst-case complexity of each BS $m$ is $\mathcal{O} \left( \left| \mathcal{U}^m \right|^2 N \right)$, and the worst-case complexity of all BSs is $\mathcal{O} \left( \sum_{m=1}^M \left| \mathcal{U}^m \right|^2 N \right)$. In contrast, the best-case complexity of each BS $m$ is $\mathcal{O} \left( \left| \mathcal{U}^m \right| N \right)$, leading to a best-case complexity of $\mathcal{O} \left( \sum_{m=1}^M \left| \mathcal{U}^m \right| N \right) = \mathcal{O} \left( U N \right)$ for all BSs \cite{9210812}. 

Finally, we compare the complexity of the proposed method with the standard VDN algorithm that directly optimizes user association and RB allocation without using the Hungarian algorithm. The maximum complexity of our proposed method is 
\begin{equation}\label{eq:complexity}
    \mathcal{O} \Bigg( K N^{\textrm{h}} \left( U + N^{\textrm{h}} \right)  + M \theta^{\textrm{h}} \left( U + \theta^{\textrm{h}} + \left| \mathcal{A} \right| \right) + \sum_{m=1}^M \left| \mathcal{U}^m \right|^2 N \Bigg). 
\end{equation}
In contrast, the complexity of the standard VDN method is 
\begin{equation}\label{eq:complexity_b}
    \mathcal{O} \Big( K N^{\textrm{h}} \left( U + N^{\textrm{h}} \right)  + M \theta^{\textrm{h}} \left( U + \theta^{\textrm{h}} + \left| \mathcal{A}_b \right| \right) \Big), 
\end{equation}
where $\mathcal{A}_b$ is the action space of the standard VDN. In our proposed method, since the Q network of each BS is only used to determine the associated users and whether to send information to the DNT, we have $\left| \mathcal{A} \right| = 2^{U+1}$. However, in the standard VDN, each BS must consider to allocate $N$ RBs to $U$ users and the physical network information transmission link, resulting in an action space size of $\left| \mathcal{A}_b \right| = N^{U+1}$. Hence, our proposed method can significantly reduce the complexity of the standard method, especially in large networks with a big number $U$ of users. 

\subsection{Convergence Analysis}
Next, we analyze the convergence of the multi-agent VDN algorithm. 
We first define the gap between the actual global Q function $Q^{\pi} \left( \boldsymbol{s}, \boldsymbol{a} \right)$ and the global Q function $Q^{\pi}_{tot} \left( \boldsymbol{s}, \boldsymbol{a} \right)$ approximated by Q networks under a policy $\pi$ as 
\begin{equation}\label{eq:gap}
    \varepsilon^{\pi} \left( \boldsymbol{s}, \boldsymbol{a} \right) = Q^{\pi} \left( \boldsymbol{s}, \boldsymbol{a} \right) - Q^{\pi}_{tot} \left( \boldsymbol{s}, \boldsymbol{a} \right), 
\end{equation}
where
\begin{equation}\label{eq:q_opt}
    Q^{\pi} \left( \boldsymbol{s}, \boldsymbol{a} \right) = \sum_{\boldsymbol{s}'} P \left( \boldsymbol{s}' | \boldsymbol{s}, \boldsymbol{a} \right) \left[ r \left( \boldsymbol{s}, \boldsymbol{a} \right) + \gamma \max_{\boldsymbol{a}'} Q^{\pi} \left( \boldsymbol{s}', \boldsymbol{a}' \right) \right]
\end{equation}
with $\boldsymbol{s}', \boldsymbol{a}'$ being the global state and the joint action of all BSs at the next time step, and $P \left( \boldsymbol{s}' | \boldsymbol{s}, \boldsymbol{a} \right)$ being the transition probability from the current global state $\boldsymbol{s}$ to the next global state $\boldsymbol{s}'$ given the joint action $\boldsymbol{a}$. 

Then, the convergence of VDN is presented in the following lemma. 
\begin{lemma}\label{lm:1}
    \emph{If 1) $\varepsilon \to 0$, or $\gamma \to 1$, and 2) $ \left| \varepsilon^{\pi_1} \left( \boldsymbol{s}, \boldsymbol{a} \right) - \varepsilon^{\pi_2} \left( \boldsymbol{s}, \boldsymbol{a} \right) \right| \leq \varepsilon $ for any $\pi_1, \pi_2$,  
    our proposed VDN method can converge to the optimal $Q^*_{tot}$. }
\end{lemma}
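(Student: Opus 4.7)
The plan is to use the Bellman contraction technique to bound the gap between the VDN iterates and the optimal factored Q function $Q^{\ast}_{tot}$, and then show that this bound vanishes under the two stated conditions taken together.

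First I would write down the fixed-point equations for the two objects being compared. The optimal $Q^{\ast}(s,a)$ is the fixed point of the standard Bellman operator given in (\ref{eq:q_opt}) under the greedy policy $\pi^{\ast}$, while $Q^{\ast}_{tot}(s,a) = \sum_m Q^{\ast}_{\boldsymbol{\theta}_m}(s^m,a^m)$ is the fixed point of the additively decomposed Bellman operator that the VDN loss in (\ref{eq:GVDN}) implicitly defines, with greedy policy $\pi^{\ast}_{tot}$ taken with respect to $Q_{tot}$. Subtracting the two equations and invoking the definition $\varepsilon^{\pi}(s,a) = Q^{\pi}(s,a) - Q^{\pi}_{tot}(s,a)$ from (\ref{eq:gap}) produces a recursion for $\varepsilon^{\pi^{\ast}_{tot}}$ in terms of its one-step-ahead values plus a cross-policy mismatch term $\varepsilon^{\pi^{\ast}} - \varepsilon^{\pi^{\ast}_{tot}}$ arising from the two distinct argmaxes over next-step actions.

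Second I would apply the contraction property. Taking absolute values, summing against $P(s'|s,a)$, and maximizing over $(s,a)$ yields
\begin{equation*}
\bigl\|\varepsilon^{\pi^{\ast}_{tot}}\bigr\|_{\infty} \le \gamma\,\bigl\|\varepsilon^{\pi^{\ast}_{tot}}\bigr\|_{\infty} + \sup_{s,a}\bigl|\varepsilon^{\pi^{\ast}}(s,a) - \varepsilon^{\pi^{\ast}_{tot}}(s,a)\bigr|,
\end{equation*}
and assumption (2) bounds the supremum on the right by $\varepsilon$. Solving this self-consistent inequality gives $\|\varepsilon^{\pi^{\ast}_{tot}}\|_{\infty} \le \varepsilon/(1-\gamma)$. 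Combined with the classical Watkins stochastic-approximation convergence of tabular Q-learning applied to the decomposed update (\ref{eq:network_param_update}), this shows that the VDN iterates converge to a function that differs from $Q^{\ast}_{tot}$ by at most $\varepsilon/(1-\gamma)$ in supremum norm.

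Third I would close the argument by observing how condition (1) drives this residual bound to zero. When $\varepsilon \to 0$ the numerator vanishes for any $\gamma$ bounded away from one, so the iterates converge exactly to $Q^{\ast}_{tot}$. The complementary regime requires more care: under assumption (2) the gap function is essentially policy-independent, which I would use to argue that the greedy policies $\pi^{\ast}$ and $\pi^{\ast}_{tot}$ select the same action at each state, so the cross-policy mismatch term vanishes identically and the recursion reduces to a homogeneous contraction whose unique fixed point is precisely $Q^{\ast}_{tot}$. The main obstacle will be handling the $\gamma \to 1$ regime rigorously, since the naive bound $\varepsilon/(1-\gamma)$ diverges there and the standard Banach fixed-point argument loses its contraction; the delicate step is showing that the near-policy-independence supplied by (2) is strong enough to force the two greedy policies to agree, replacing the quantitative contraction argument with a structural one that does not depend on $\gamma$.
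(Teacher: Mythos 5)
There is a genuine gap. The lemma asserts that the VDN iteration converges to $Q^{*}_{tot}$, the fixed point of the decomposed update, and the paper proves this by showing that the VDN Bellman operator $H^{\textrm{V}}$ in (\ref{eq:vdn_operator}) obeys $\| H^{\textrm{V}} Q^{\pi_1}_{tot} - H^{\textrm{V}} Q^{\pi_2}_{tot} \|_{\infty} \leq \gamma \| Q^{\pi_1}_{tot} - Q^{\pi_2}_{tot} \|_{\infty} + (1-\gamma)\varepsilon$ and then invoking the Banach fixed-point theorem once the slack term vanishes under condition (1). Your proposal instead bounds a different quantity, the fixed-point gap $\|\varepsilon^{\pi}\|_{\infty} \le \varepsilon/(1-\gamma)$, which measures how far $Q_{tot}$ is from the true $Q$-function, and you delegate the convergence of the iterates themselves to ``classical Watkins stochastic-approximation convergence applied to the decomposed update.'' That delegation begs the question: the decomposed update is not a standard Q-learning update, because the additive constraint (\ref{eq:global_q}) confines $Q_{tot}$ to the subspace of factored functions, and the effective operator on that subspace carries the policy-dependent correction terms $\varepsilon^{\pi}(\boldsymbol{s}',\boldsymbol{a}')$ and $\varepsilon^{\pi}(\boldsymbol{s},\boldsymbol{a})$ visible in (\ref{eq:vdn_operator}). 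Verifying that this perturbed operator is still a contraction is exactly the content of the lemma, and it is the step your argument omits.

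Your treatment of the $\gamma \to 1$ regime also does not go through. You correctly observe that $\varepsilon/(1-\gamma)$ diverges there, and you propose to rescue the argument by showing that condition (2) forces the greedy policies of $Q^{*}$ and $Q^{*}_{tot}$ to coincide; but (2) only says the gap function varies by at most a constant $\varepsilon$ across policies at each fixed $(\boldsymbol{s},\boldsymbol{a})$, and a constant that is not required to be small cannot force two argmaxes to agree, so the ``structural'' step you flag as delicate is not available. The paper's algebra sidesteps the issue: rewriting $\gamma(Q^{\pi_1}-Q^{\pi_2}) - (\varepsilon^{\pi_1}-\varepsilon^{\pi_2})$ as $\gamma(Q^{\pi_1}_{tot}-Q^{\pi_2}_{tot}) - (1-\gamma)(\varepsilon^{\pi_1}-\varepsilon^{\pi_2})$ in (\ref{eq:up_bound_5}) makes the residual $(1-\gamma)\varepsilon$ rather than $\varepsilon/(1-\gamma)$, so it shrinks as $\gamma \to 1$ instead of blowing up. That reversed $\gamma$-dependence is the entire reason condition (1) can be stated as a disjunction, and it is what your route cannot reproduce.
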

\begin{proof}
    To analyze the convergence of the VDN, we first define 
    the Bellman operator $H^{\textrm{V}}$ of our proposed VDN method for updating the approximated global Q function $Q_{tot}$. Since $Q^{\pi}_{tot} \left( \boldsymbol{s}, \boldsymbol{a} \right) = Q^{\pi} \left( \boldsymbol{s}, \boldsymbol{a} \right) - \varepsilon^{\pi} \left( \boldsymbol{s}, \boldsymbol{a} \right)$ and the standard Bellman operator is $\left( H Q \right) \left( \boldsymbol{s}, \boldsymbol{a} \right) = \sum_{\boldsymbol{s}'} P \left( \boldsymbol{s}' | \boldsymbol{s}, \boldsymbol{a} \right) \left[ r \left( \boldsymbol{s}, \boldsymbol{a} \right) + \gamma \max_{\boldsymbol{a}'} Q \left( \boldsymbol{s}', \boldsymbol{a}' \right) \right]$, the Bellman operator of our proposed VDN method is \cite{fan2020theoretical, wang2022distributed}
    \begin{equation}\label{eq:vdn_operator}
        \begin{split}
            & \left( H^{\textrm{V}} Q^{\pi}_{tot} \right) \left( \boldsymbol{s}, \boldsymbol{a} \right) = \sum_{\boldsymbol{s}} P \left( \boldsymbol{s}' | \boldsymbol{s}, \boldsymbol{a} \right) \Big [ r \left( \boldsymbol{s}, \boldsymbol{a} \right) \\
            & \quad + \gamma \Big ( \max_{\boldsymbol{a}'} Q^{\pi}_{tot} \left( \boldsymbol{s}', \boldsymbol{a}' \right) + \varepsilon^{\pi} \left( \boldsymbol{s}', \boldsymbol{a}' \right) \Big ) \Big] - \varepsilon^{\pi} \left( \boldsymbol{s}, \boldsymbol{a} \right). 
        \end{split}
    \end{equation}
    Given (\ref{eq:vdn_operator}), we use the Banach fixed point theorem \cite{abounadi2002stochastic} to prove the convergence of our proposed VDN method. In particular, according to the Banach fixed point theorem, to prove the convergence of our proposed VDN method, we only need to prove that $H^\textrm{V}$ satisfies the following condition: 
    \begin{equation}\label{eq:contactor}
        \begin{split}
            \| H^\textrm{V} Q^{\pi_1}_{tot} - H^\textrm{V} Q^{\pi_2}_{tot} \|_{{\infty}} \leq \gamma \| Q^{\pi_1}_{tot} -  Q^{\pi_2}_{tot} &\|_{{\infty}}, \\ 
            &\text{for any } \pi_1, \pi_2,  
        \end{split}
    \end{equation}
    where $Q^{\pi_1}_{tot} \left( \boldsymbol{s}, \boldsymbol{a} \right)$ and $Q^{\pi_2}_{tot} \left( \boldsymbol{s}, \boldsymbol{a} \right)$ are the global Q function approximated by Q networks under policies $\pi_1$ and $\pi_2$, we have  
    \begin{equation}\label{eq:up_bound_1}
        \begin{split}
            &\| H^\textrm{V} Q^{\pi_1}_{tot} - H^\textrm{V} Q^{\pi_2}_{tot} \|_{{\infty}} \\
            & = \max_{\boldsymbol{s}, \boldsymbol{a}} \left| H^\textrm{V} Q^{\pi_1}_{tot} - H^\textrm{V} Q^{\pi_2}_{tot} \right| \\
            & = \max_{\boldsymbol{s}, \boldsymbol{a}} \Bigg | \sum_{\boldsymbol{s}'} P \left( \boldsymbol{s}' | \boldsymbol{s}, \boldsymbol{a} \right) \gamma \left[ \max_{\boldsymbol{a}_1'} \Big ( Q^{\pi_1}_{tot} \left( \boldsymbol{s}', \boldsymbol{a}_1' \right) + \varepsilon^{\pi_1} \left( \boldsymbol{s}', \boldsymbol{a}_1' \right) \Big ) \right.  \\
            & \quad \quad \quad \quad \left. - \max_{\boldsymbol{a}_2'} \Big ( Q^{\pi_2}_{tot} \left( \boldsymbol{s}', \boldsymbol{a}_2' \right) + \varepsilon^{\pi_2} \left( \boldsymbol{s}', \boldsymbol{a}_2' \right) \Big ) \right] \\
            & \quad \quad \quad \quad - \Big ( \varepsilon^{\pi_1} \left( \boldsymbol{s}, \boldsymbol{a} \right) - \varepsilon^{\pi_2} \left( \boldsymbol{s}, \boldsymbol{a} \right) \Big ) \Bigg | \\
            & = \max_{\boldsymbol{s}, \boldsymbol{a}} \! \Bigg | \! \sum_{\boldsymbol{s}'} \! P \left( \boldsymbol{s}' | \boldsymbol{s}, \boldsymbol{a} \right) \! \gamma \left( \max_{\boldsymbol{a}_1'} Q^{\pi_1} \! \left( \boldsymbol{s}', \boldsymbol{a}_1' \right) \! - \! \max_{\boldsymbol{a}_2'} Q^{\pi_2} \! \left( \boldsymbol{s}', \boldsymbol{a}_2' \right) \right) \\
            & \quad \quad \quad \quad - \Big ( \varepsilon^{\pi_1} \left( \boldsymbol{s}, \boldsymbol{a} \right) - \varepsilon^{\pi_2} \left( \boldsymbol{s}, \boldsymbol{a} \right) \Big ) \Bigg |. 
        \end{split}
    \end{equation}
    Since $\max_{\boldsymbol{a}_1'} Q^{\pi_1} \! \left( \boldsymbol{s}', \boldsymbol{a}_1' \right) \! - \! \max_{\boldsymbol{a}_2'} Q^{\pi_2} \! \left( \boldsymbol{s}', \boldsymbol{a}_2' \right) \leq \max_{\boldsymbol{a}'} \Big ( Q^{\pi_1} \! \left( \boldsymbol{s}', \boldsymbol{a}' \right) \! - \! Q^{\pi_2} \! \left( \boldsymbol{s}', \boldsymbol{a}' \right) \Big ) $, we have 
    \begin{equation}\label{eq:up_bound_2}
        \begin{split}
            &\| H^\textrm{V} Q^{\pi_1}_{tot} - H^\textrm{V} Q^{\pi_2}_{tot} \|_{{\infty}} \\
            & \leq \max_{\boldsymbol{s}, \boldsymbol{a}} \! \Bigg | \! \sum_{\boldsymbol{s}'} \! P \left( \boldsymbol{s}' | \boldsymbol{s}, \boldsymbol{a} \right) \! \gamma \max_{\boldsymbol{a}'} \Big ( Q^{\pi_1} \! \left( \boldsymbol{s}', \boldsymbol{a}' \right) \! - \! Q^{\pi_2} \! \left( \boldsymbol{s}', \boldsymbol{a}' \right) \Big ) \\
            & \quad \quad \quad \quad - \Big ( \varepsilon^{\pi_1} \left( \boldsymbol{s}, \boldsymbol{a} \right) - \varepsilon^{\pi_2} \left( \boldsymbol{s}, \boldsymbol{a} \right) \Big ) \Bigg |. 
        \end{split}
    \end{equation}
    Since $\sum_{\boldsymbol{s}'} \! P \left( \boldsymbol{s}' | \boldsymbol{s}, \boldsymbol{a} \right) \! \gamma \max_{\boldsymbol{a}'} \Big ( Q^{\pi_1} \! \left( \boldsymbol{s}', \boldsymbol{a}' \right) \! - \! Q^{\pi_2} \! \left( \boldsymbol{s}', \boldsymbol{a}' \right) \Big )$ is the expected value of $\gamma \max_{\boldsymbol{a}'} \Big ( Q^{\pi_1} \! \left( \boldsymbol{s}', \boldsymbol{a}' \right) \! - \! Q^{\pi_2} \! \left( \boldsymbol{s}', \boldsymbol{a}' \right) \Big )$
    with respect to $\boldsymbol{s}'$, it is less than or equal to $\max_{\boldsymbol{s}', \boldsymbol{a}'} \Big ( Q^{\pi_1} \! \left( \boldsymbol{s}', \boldsymbol{a}' \right) \! - \! Q^{\pi_2} \! \left( \boldsymbol{s}', \boldsymbol{a}' \right) \Big )$. Hence, we have 
    \begin{equation}\label{eq:up_bound_3}
        \begin{split}
            &\| H^\textrm{V} Q^{\pi_1}_{tot} - H^\textrm{V} Q^{\pi_2}_{tot} \|_{{\infty}} \\
            & \leq \max_{\boldsymbol{s}, \boldsymbol{a}} \Bigg | \gamma \max_{\boldsymbol{s}', \boldsymbol{a}'} \Big ( Q^{\pi_1} \! \left( \boldsymbol{s}', \boldsymbol{a}' \right) \! - \! Q^{\pi_2} \! \left( \boldsymbol{s}', \boldsymbol{a}' \right) \Big ) \\
            & \quad \quad \quad \quad - \Big ( \varepsilon^{\pi_1} \left( \boldsymbol{s}, \boldsymbol{a} \right) - \varepsilon^{\pi_2} \left( \boldsymbol{s}, \boldsymbol{a} \right) \Big ) \Bigg |.   
        \end{split}
    \end{equation}
    In (\ref{eq:up_bound_3}), since $\max_{\boldsymbol{s}', \boldsymbol{a}'} \Big ( Q^{\pi_1} \! \left( \boldsymbol{s}', \boldsymbol{a}' \right)\! - \! Q^{\pi_2} \! \left( \boldsymbol{s}', \boldsymbol{a}' \right) \Big ) = \max_{\boldsymbol{s}, \boldsymbol{a}} \Big ( Q^{\pi_1} \! \left( \boldsymbol{s}, \boldsymbol{a} \right)\! - \! Q^{\pi_2} \! \left( \boldsymbol{s}, \boldsymbol{a} \right) \Big )$, and $\varepsilon^{\pi_1} \! \left( \boldsymbol{s}, \boldsymbol{a} \right) - \varepsilon^{\pi_2} \! \left( \boldsymbol{s}, \boldsymbol{a} \right) \geq \min_{\boldsymbol{s}, \boldsymbol{a}} \Big ( \varepsilon^{\pi_1} \! \left( \boldsymbol{s}, \boldsymbol{a} \right) - \varepsilon^{\pi_2} \! \left( \boldsymbol{s}, \boldsymbol{a} \right) \Big )$, we have 
    \begin{equation}\label{eq:up_bound_4}
        \begin{split}
            & \gamma \max_{\boldsymbol{s}', \boldsymbol{a}'}  \Big ( Q^{\pi_1} \! \left( \boldsymbol{s}', \boldsymbol{a}' \right) \! - \! Q^{\pi_2} \! \left( \boldsymbol{s}', \boldsymbol{a}' \right) \Big ) \! - \! \Big ( \varepsilon^{\pi_1} \! \left( \boldsymbol{s}, \boldsymbol{a} \right) - \varepsilon^{\pi_2} \! \left( \boldsymbol{s}, \boldsymbol{a} \right) \Big ) \\
            & \leq \! \max_{\boldsymbol{s}, \boldsymbol{a}} \! \gamma \! \Big ( \! Q^{\pi_1} \! \left( \boldsymbol{s}, \boldsymbol{a} \right) \! - \! Q^{\pi_2} \! \left( \boldsymbol{s}, \boldsymbol{a} \right) \! \Big ) \! - \! \min_{\boldsymbol{s}, \boldsymbol{a}} \! \Big ( \! \varepsilon^{\pi_1} \! \left( \boldsymbol{s}, \boldsymbol{a} \right) \! - \! \varepsilon^{\pi_2} \! \left( \boldsymbol{s}, \boldsymbol{a} \right) \! \Big ) \\
            & = \max_{\boldsymbol{s}, \boldsymbol{a}} \! \left[ \! \gamma \! \Big ( Q^{\pi_1} \! \left( \boldsymbol{s}, \boldsymbol{a} \right) \! - \! Q^{\pi_2} \! \left( \boldsymbol{s}, \boldsymbol{a} \right) \! \Big ) \! - \! \Big ( \! \varepsilon^{\pi_1} \! \left( \boldsymbol{s}, \boldsymbol{a} \right) \! - \! \varepsilon^{\pi_2} \! \left( \boldsymbol{s}, \boldsymbol{a} \right) \! \Big ) \! \right]. 
        \end{split}
    \end{equation}
    Based on (\ref{eq:up_bound_4}), (\ref{eq:up_bound_3}) can be further simplified as  
    \begin{equation}\label{eq:up_bound_5}
        \begin{split}
            &\| H^\textrm{V} Q^{\pi_1}_{tot} - H^\textrm{V} Q^{\pi_2}_{tot} \|_{{\infty}} \\
            & \leq \Bigg | \max_{\boldsymbol{s}, \boldsymbol{a}} \left[ \gamma \Big ( Q^{\pi_1} \! \left( \boldsymbol{s}, \boldsymbol{a} \right) \! - \! Q^{\pi_2} \! \left( \boldsymbol{s}, \boldsymbol{a} \right) \Big ) \right. \\
            & \quad \quad \quad \quad \left. - \Big ( \varepsilon^{\pi_1} \left( \boldsymbol{s}, \boldsymbol{a} \right) - \varepsilon^{\pi_2} \left( \boldsymbol{s}, \boldsymbol{a} \right) \Big ) \right] \Bigg | \\ 
            & \leq \max_{\boldsymbol{s}, \boldsymbol{a}} \Bigg | \gamma \Big ( Q^{\pi_1} \! \left( \boldsymbol{s}, \boldsymbol{a} \right) \! - \! Q^{\pi_2} \! \left( \boldsymbol{s}, \boldsymbol{a} \right) \Big ) \\
            & \quad \quad \quad \quad - \Big ( \varepsilon^{\pi_1} \left( \boldsymbol{s}, \boldsymbol{a} \right) - \varepsilon^{\pi_2} \left( \boldsymbol{s}, \boldsymbol{a} \right) \Big ) \Bigg | \\
            & = \max_{\boldsymbol{s}, \boldsymbol{a}} \! \Bigg | \gamma \Big ( Q^{\pi_1} \! \left( \boldsymbol{s}, \boldsymbol{a} \right) \! - \! Q^{\pi_2} \! \left( \boldsymbol{s}, \boldsymbol{a} \right) \! - \! \varepsilon^{\pi_1} \! \left( \boldsymbol{s}, \boldsymbol{a} \right) \! + \! \varepsilon^{\pi_2} \! \left( \boldsymbol{s}, \boldsymbol{a} \right) \Big) \\
            & \quad \quad \quad - \! \left( 1 \! - \! \gamma \right) \Big ( \! \varepsilon^{\pi_1} \! \left( \boldsymbol{s}, \boldsymbol{a} \right) \! - \! \varepsilon^{\pi_2} \! \left( \boldsymbol{s}, \boldsymbol{a} \right) \Big ) \Bigg | \\
            & = \max_{\boldsymbol{s}, \boldsymbol{a}} \! \Bigg | \gamma \Big ( Q^{\pi_1}_{tot} \! \left( \boldsymbol{s}, \boldsymbol{a} \right) \! - \! Q^{\pi_2}_{tot} \! \left( \boldsymbol{s}, \boldsymbol{a} \right) \Big) \\
            & \quad \quad \quad - \! \left( 1 \! - \! \gamma \right) \Big ( \! \varepsilon^{\pi_1} \! \left( \boldsymbol{s}, \boldsymbol{a} \right) \! - \! \varepsilon^{\pi_2} \! \left( \boldsymbol{s}, \boldsymbol{a} \right) \Big) \Bigg |. 
        \end{split}
    \end{equation}
    Since $\left| \varepsilon^{\pi_1} \left( \boldsymbol{s}, \boldsymbol{a} \right) - \varepsilon^{\pi_2} \left( \boldsymbol{s}, \boldsymbol{a} \right) \right| \leq \varepsilon$, we have $ \varepsilon^{\pi_1} \left( \boldsymbol{s}, \boldsymbol{a} \right) - \varepsilon^{\pi_2} \left( \boldsymbol{s}, \boldsymbol{a} \right) \geq -\varepsilon$. Hence, (\ref{eq:up_bound_5}) can be rewritten as  
    \begin{equation}\label{eq:up_bound_6}
        \begin{split}
            &\| H^\textrm{V} Q^{\pi_1}_{tot} - H^\textrm{V} Q^{\pi_2}_{tot} \|_{{\infty}} \\
            & \leq \max_{\boldsymbol{s}, \boldsymbol{a}} \Bigg | \gamma \Big ( Q^{\pi_1}_{tot} \! \left( \boldsymbol{s}, \boldsymbol{a} \right) \! - \! Q^{\pi_2}_{tot} \! \left( \boldsymbol{s}, \boldsymbol{a} \right) \Big) -  \left( 1 \! - \! \gamma \right) \varepsilon \Bigg | \\
            & = \gamma \max_{\boldsymbol{s}, \boldsymbol{a}} \Big | Q^{\pi_1}_{tot} \! \left( \boldsymbol{s}, \boldsymbol{a} \right) \! - \! Q^{\pi_2}_{tot} \! \left( \boldsymbol{s}, \boldsymbol{a} \right) \Big | +  \left( 1 \! - \! \gamma \right) \varepsilon \\
            & = \gamma \| Q^{\pi_1}_{tot} -  Q^{\pi_2}_{tot} \|_{{\infty}} + \left( 1 \! - \! \gamma \right) \varepsilon. 
        \end{split}
    \end{equation}
    From (\ref{eq:up_bound_6}), we see that when $\varepsilon \to 0$ or $\gamma \to 1$, the VDN Bellman operator $H^\textrm{V}$ satisfies (\ref{eq:contactor}). 
    Based on the Banach fixed-point theorem, 
    the VDN in our proposed method will converge to $Q_{tot}^*$. This completes the proof. 
\end{proof}

\section{Simulation Results}\label{se:simulation}
In this section, we perform extensive simulations to evaluate the performance of our designed GRU and VDN based method. We first introduce the setup of the simulations. Then, we analyze the simulation results of our designed method.   

\subsection{System Setup}
For simulations, we consider a $300 \times 100$ network area. $3$ BSs are located at the coordinates $\left[ -100,0 \right]$, $\left[ 0,0 \right]$, and $\left[ 100,0 \right]$, respectively. Each BS has a coverage radius of $60$. The BSs in the system serve $U=12$ users. The cloud server that generates and controls the DNT is located at $\left[ 0,50 \right]$. 
The GRU model used to predict the DNT states is implemented on the cloud server. The cloud server will collect a dataset of $2,000$ trajectories from all $U$ users to train the GRU model, with each trajectory having $30$ steps. For comparison purposes, we consider an independent Q learning method in which the Q network of each BS $m$ is trained by its local Q values $Q_m \left( \boldsymbol{s}^m, \boldsymbol{a}^m \right)$ rather than the global Q values $Q_{tot} \left( \boldsymbol{s}, \boldsymbol{a} \right)$ as follows: 
\begin{equation}\label{eq:iql_update}
    \boldsymbol{\theta}_m \leftarrow \boldsymbol{\theta}_m - \lambda_Q \nabla_{\boldsymbol{\theta}_m} J^m_Q, 
\end{equation}
where $J_Q^m = \mathbb{E}_{\left( \boldsymbol{s}_t^m, \boldsymbol{a}_t^m, r^m, \boldsymbol{s}_{t+1}^m \right) \sim \mathcal{D}_g} \Big[ \Big( \Big( r^m \left( \boldsymbol{s}_t^m, \boldsymbol{a}_t^m \right) + \gamma \max_{\boldsymbol{a}_{t+1}^m} \tilde{Q}_{\tilde{\boldsymbol{\theta}}_m} \left( \boldsymbol{s}^m_{t+1}, \boldsymbol{a}^m_{t+1} \right) \Big) - Q_{\boldsymbol{\theta}_m} \left( \boldsymbol{s}_t^m , \boldsymbol{a}_t^m \right) \Big)^2 \Big]$, with $r^m \left( \boldsymbol{s}_t^m, \boldsymbol{a}_t^m \right)$ being the local reward of BS $m$. The baseline model parameters are similar to that of the designed method. Other parameters used in the simulations are listed in Table \ref{tab:sys_params}. 
\begin{table}[!t]
    \caption{System Parameters \cite{zhang2023optimization, hu2021rethinking}}
    \label{tab:sys_params}
    \centering
    \begin{tabular}{|c|c|c|c|}
        \hline
        \textbf{Parameter} & \textbf{Value} & \textbf{Parameter} & \textbf{Value} \\
        \hline
        $U$ & 10 & $M$ & 3 \\
        \hline
        $G$ & 75 & $B$ & 1 \\
        \hline 
        $P$ & 1 & $N_0$ & $1 \times 10^{-5}$ \\
        \hline
        $\theta^{\textrm{h}}$ & 128 & $\lambda_G$ & $1 \times 10^{-3}$ \\
        \hline
        $N^{\textrm{h}}$ & 128 & $K$ & 5 \\
        \hline
        $\gamma$ & 0.2 & $\lambda_{Q}$ & $1 \times 10^{-4}$ \\
        \hline
    \end{tabular}
\end{table}  

\subsection{Simulation Results and Analysis}
\begin{figure}[!t]
  \begin{center}
    \includegraphics[width=8cm]{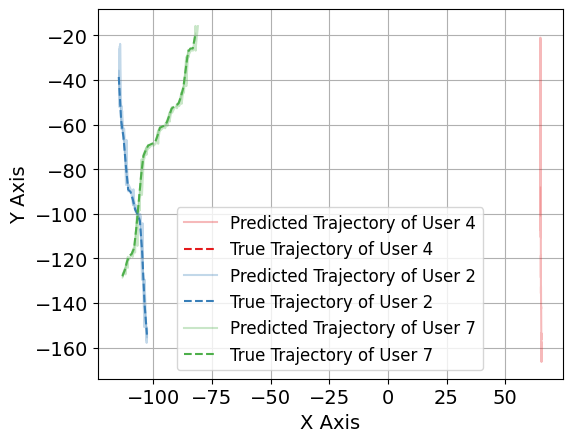}
    \caption{The prediction of the user movement trajectories via the GRU model. } 
    \label{fig:gru}
  \end{center}
\end{figure}
Fig. \ref{fig:gru} shows the user movement trajectories predicted by the GRU model. The users in Fig. \ref{fig:gru} are randomly selected from all users. From Fig. \ref{fig:gru}, we see that the positioning mean square errors of user 2, user 4, and user 7 are respectively 0.188, 0.075 and 0.320. This is because the GRU effectively captures dependencies of the historical user movement through its gating mechanisms. From this figure, we also see that the mobility prediction mean square errors of users 2 and 7 are higher compared to that of user 4. This is because the movement dynamics of users 2 and 7 are higher compared to that of user 4. 

\begin{figure}[!t]
  \begin{center}
    \includegraphics[width=8cm]{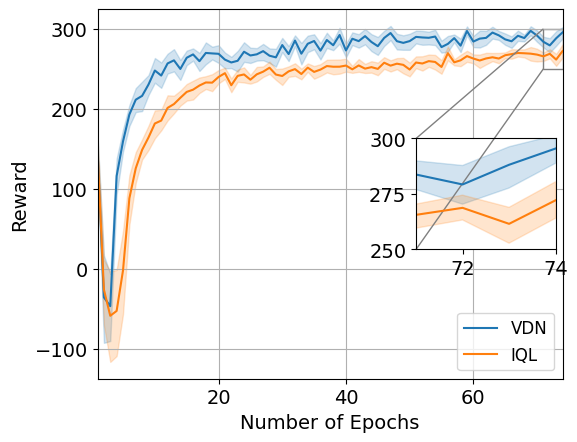}
    \caption{Convergence of VDN and IQL ($\epsilon = 0.3, N = 12$). } 
    \label{fig:vdn_vs_iql}
  \end{center}
\end{figure}
In Fig. \ref{fig:vdn_vs_iql}, we show the convergence of both the proposed method and the independent Q based method. Fig. \ref{fig:vdn_vs_iql} shows that, as the number of training epochs increases, the average rewards of both considered algorithms increase. This is because the policy of determining the associated users and the physical network information transmission is optimized by the considered RL algorithms. We also see that our designed algorithm can improve the weighted sum of data rates and the synchronization accuracy by up to 28.96\% compared to the independent Q. This gain stems from the fact that the proposed method allows each BS to use its local state to collaboratively find a globally optimal policy that maximize total data rates of all users, while minimize the asynchronization between the physical network and the DNT. 

\begin{figure}[!t]
  \begin{center}
    \includegraphics[width=8.5cm]{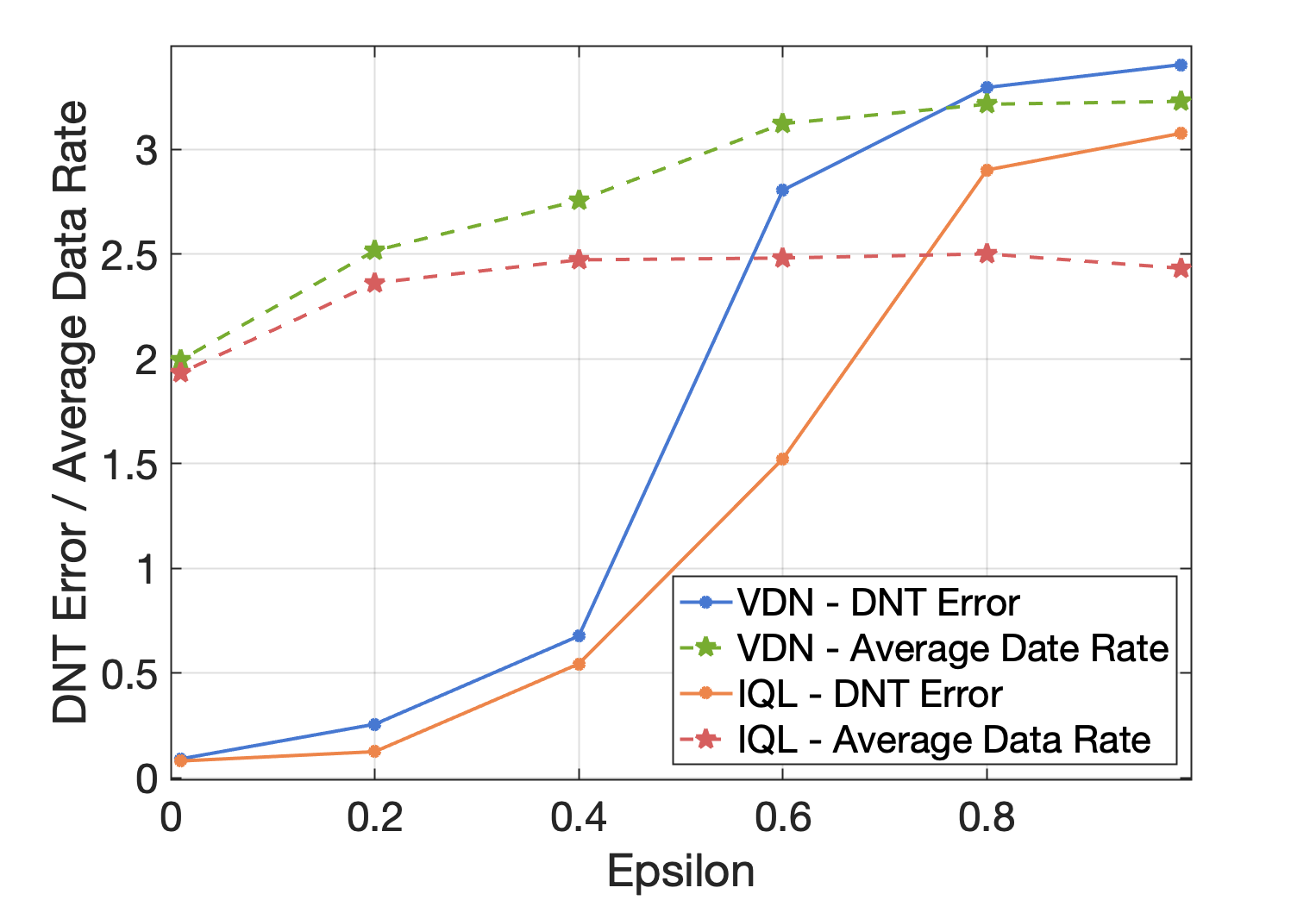}
    \caption{The average DNT error and the data rate as $\epsilon$ varies ($N = 9$). } 
    \label{fig:epsilon}
  \end{center}
\end{figure}
Fig. \ref{fig:epsilon} shows how the average DNT error and the average data rate of our proposed method and the independent Q method change as $\epsilon$ varies. Fig. \ref{fig:epsilon} shows that our proposed method improves the average data rate by up to $31.79\%$ compared to the independent Q method when $\epsilon = 0.8$, while the average DNT error of the proposed method is higher compared to the independent Q method. This is because our proposed method optimizes the weighted sum of the data rates of all users and the gap between the DNT and the physical network. 
Fig. \ref{fig:epsilon} also shows that the average data rate of the independent Q method reduces as $\epsilon$ increases from $0.8$ to $0.99$. This is because in the independent Q method, as the value of the epsilon increases, each BS prioritizes the data rate of its associated users, which may consequently increase the interference between BSs.

\begin{figure}[!t]
  \begin{center}
    \includegraphics[width=8.5cm]{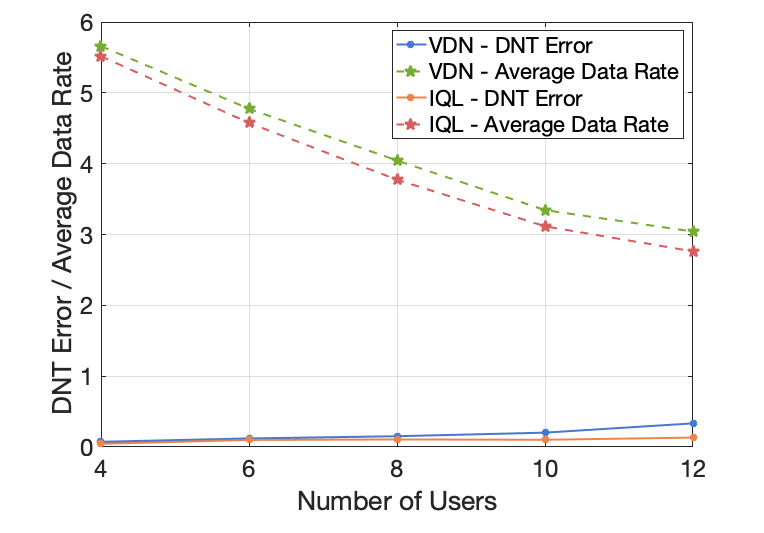}
    \caption{The average DNT error and the data rate as the number of users varies ($\epsilon = 0.25, N = 12$). } 
    \label{fig:user_num}
  \end{center}
\end{figure}
In Fig. \ref{fig:user_num}, we show how the average DNT error and the average data rate of our proposed method and the independent Q method change as the number of users varies. From Fig. \ref{fig:user_num}, we see that the DNT errors of both our considered algorithm and the independent Q method increase as the number of users increases. This is because as the network serves more users, the BSs may not have enough RBs to maintain DNT synchronization. Fig. \ref{fig:user_num} also shows that the average data rates of both methods decrease with the increase of the number of users. This stems from the fact that the BSs have limited RBs to serve users, such that several users are served by the RBs with large interference. Furthermore, Fig. \ref{fig:user_num} shows that when the number of users is $12$, the data rate of our proposed method is $10.15\%$ higher than that of the independent Q method but the average DNT error of the proposed method is $0.2$ higher compared to the independent Q method. This is because our proposed method adapts the RB allocation policy of each BS to accommodate the growing number of users to maintain a balance between serving the users in the physical network and the DNT synchronization.

\begin{figure}[!t]
  \begin{center}
    \includegraphics[width=8.5cm]{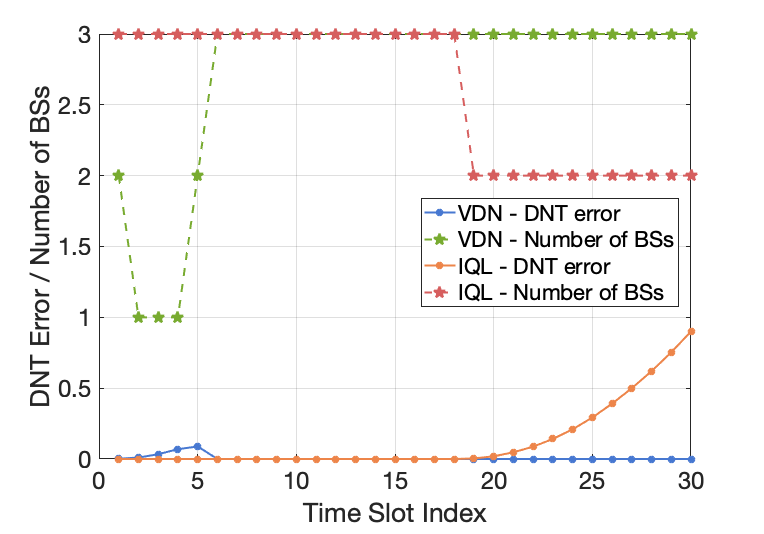}
    \caption{The number of BSs updating the DNT and the DNT errors at each step within a single episode ($\epsilon = 0.25, N = 12$). } 
    \label{fig:ep_detail}
  \end{center}
\end{figure}
Fig. \ref{fig:ep_detail} shows how the DNT errors of our proposed method and the independent Q method change as the number of BSs that updates the DNT at each step. 
In Fig. \ref{fig:ep_detail}, we show that our proposed method can achieve a small DNT error compared to the independent Q method. This is due to the fact that our method enables better BS collaborations, allowing them to optimize RB allocation by considering the joint impact of their decisions on the overall system performance. 
 
\section{Conclusion}\label{se:conclusion}
In this paper, we have proposed a DNT enabled network consisting of a physical network and its DNT. In this network, a set of BSs in the physical network must use their limited spectrum resources to serve a set of users while periodically transmitting the partial observed physical network information to a cloud server to update the DNT. We have formulated this resource allocation task as an optimization problem aimed at maximizing the sum of data rates for all users while minimizing the asynchronization between the physical network and the DNT. To address this problem, we have introduced a method based on GRUs and the VDN. The GRU component enables the DNT to predict its future state and maintain updates even when physical network information is not transmitted. The VDN component allows each BS to learn the relationship between its local observation and the team reward of all BSs, allowing it to collaborate with others in determining whether to transmit physical network information and optimizing spectrum allocation. Simulation results have demonstrated that, compared to a baseline approach utilizing GRU and IQL, our proposed method significantly improves the weighted sum of user data rates and the asynchronization between the physical network and the DNT. 


\bibliographystyle{IEEEbib}
\bibliography{references1}

\begin{thebibliography}{10}

\bibitem{9429703}
Y.~Wu, K.~Zhang, and Y.~Zhang,
\newblock ``Digital twin networks: {A} survey,''
\newblock {\em IEEE Internet of Things Journal}, vol. 8, no. 18, pp. 13789--13804, September 2021.

\bibitem{9839640}
L.~Hui, M.~Wang, L.~Zhang, L.~Lu, and Y.~Cui,
\newblock ``Digital twin for networking: {A} data-driven performance modeling perspective,''
\newblock {\em IEEE Network}, vol. 37, no. 3, pp. 202--209, May/June 2023.

\bibitem{liu2024digital}
Y.~Liu, Z.~Peng, Z.~Zhang, H.~Yu, and M.~Chen,
\newblock ``Digital network twins for next-generation wireless: {C}reation, optimization, and challenges,''
\newblock {\em arXiv preprint arXiv:2410.18002}, October 2024.

\bibitem{9854866}
L.~U. Khan, Z.~Han, W.~Saad, E.~Hossain, M.~Guizani, and C.~S. Hong,
\newblock ``Digital twin of wireless systems: {O}verview, taxonomy, challenges, and opportunities,''
\newblock {\em IEEE Communications Surveys \& Tutorials}, vol. 24, no. 4, pp. 2230--2254, August 2022.

\bibitem{10012285}
Q.~Guo, F.~Tang, T.~K. Rodrigues, and N.~Kato,
\newblock ``Five disruptive technologies in 6{G} to support digital twin networks,''
\newblock {\em IEEE Wireless Communications}, vol. 31, no. 1, pp. 149--155, February 2024.

\bibitem{zhang2024mapping}
Z.~Zhang, M.~Chen, Z.~Yang, and Y.~Liu,
\newblock ``Mapping wireless networks into digital reality through joint vertical and horizontal learning,''
\newblock {\em arXiv preprint arXiv:2404.14497}, April 2024.

\bibitem{sharma2022digital}
A.~Sharma, E.~Kosasih, J.~Zhang, A.~Brintrup, and A.~Calinescu,
\newblock ``Digital twins: {S}tate of the art theory and practice, challenges, and open research questions,''
\newblock {\em Journal of Industrial Information Integration}, vol. 30, pp. 100383, September 2022.

\bibitem{10148936}
X.~Lin, L.~Kundu, C.~Dick, E.~Obiodu, T.~Mostak, and M.~Flaxman,
\newblock ``6{G} digital twin networks: {F}rom theory to practice,''
\newblock {\em IEEE Communications Magazine}, vol. 61, no. 11, pp. 72--78, June 2023.

\bibitem{8764584}
R.~Dong, C.~She, W.~Hardjawana, Y.~Li, and B.~Vucetic,
\newblock ``Deep learning for hybrid 5{G} services in mobile edge computing systems: {L}earn from a digital twin,''
\newblock {\em IEEE Transactions on Wireless Communications}, vol. 18, no. 10, pp. 4692--4707, July 2019.

\bibitem{10234596}
C.~Ruah, O.~Simeone, and B.~M. Al-Hashimi,
\newblock ``A {B}ayesian framework for digital twin-based control, monitoring, and data collection in wireless systems,''
\newblock {\em IEEE Journal on Selected Areas in Communications}, vol. 41, no. 10, pp. 3146--3160, August 2023.

\bibitem{zhou2022digital}
C.~Zhou, J.~Gao, M.~Li, X.~S. Shen, and W.~Zhuang,
\newblock ``Digital twin-empowered network planning for multi-tier computing,''
\newblock {\em Journal of Communications and Information Networks}, vol. 7, no. 3, pp. 221--238, September 2022.

\bibitem{hashash2022edge}
O.~Hashash, C.~Chaccour, and W.~Saad,
\newblock ``Edge continual learning for dynamic digital twins over wireless networks,''
\newblock {\em arXiv preprint arXiv:2204.04795}, April 2022.

\bibitem{guo2023resource}
Q.~Guo, F.~Tang, and N.~Kato,
\newblock ``Resource allocation for aerial assisted digital twin edge mobile network,''
\newblock {\em IEEE Journal on Selected Areas in Communications}, vol. 41, no. 10, pp. 3070--3079, October 2023.

\bibitem{zheng2023data}
J.~Zheng, T.~H. Luan, Y.~Zhang, R.~Li, Y.~Hui, L.~Gao, and M.~Dong,
\newblock ``Data synchronization in vehicular digital twin network: {A} game theoretic approach,''
\newblock {\em IEEE Transactions on Wireless Communications}, vol. 22, no. 11, pp. 7635--7647, November 2023.

\bibitem{9865226}
Y.~Han, D.~Niyato, C.~Leung, D.~I. Kim, K.~Zhu, S.~Feng, X.~Shen, and C.~Miao,
\newblock ``A dynamic hierarchical framework for {I}o{T}-assisted digital twin synchronization in the metaverse,''
\newblock {\em IEEE Internet of Things Journal}, vol. 10, no. 1, pp. 268--284, January 2023.

\bibitem{cho2014learning}
K.~Cho,
\newblock ``Learning phrase representations using {RNN} encoder-decoder for statistical machine translation,''
\newblock {\em arXiv preprint arXiv:1406.1078}, June 2014.

\bibitem{chung2014empirical}
J.~Chung, C.~Gulcehre, K.~Cho, and Y.~Bengio,
\newblock ``Empirical evaluation of gated recurrent neural networks on sequence modeling,''
\newblock {\em arXiv preprint arXiv:1412.3555}, December 2014.

\bibitem{sunehag2017value}
P.~Sunehag, G.~Lever, A.~Gruslys, W.~M. Czarnecki, V.~Zambaldi, M.~Jaderberg, M.~Lanctot, N.~Sonnerat, J.~Z. Leibo, K.~Tuyls, et~al.,
\newblock ``Value-decomposition networks for cooperative multi-agent learning,''
\newblock {\em arXiv preprint arXiv:1706.05296}, June 2017.

\bibitem{tan1993multi}
M.~Tan,
\newblock ``Multi-agent reinforcement learning: {I}ndependent vs. cooperative agents,''
\newblock in {\em Proc. International Conference on Machine Learning}, Amherst, MA, USA, June 1993, pp. 330--337.

\bibitem{tabassum2019fundamentals}
H.~Tabassum, M.~Salehi, and E.~Hossain,
\newblock ``Fundamentals of mobility-aware performance characterization of cellular networks: {A} tutorial,''
\newblock {\em IEEE Communications Surveys \& Tutorials}, vol. 21, no. 3, pp. 2288--2308, March 2019.

\bibitem{lovejoy1991survey}
William~S Lovejoy,
\newblock ``A survey of algorithmic methods for partially observed {M}arkov decision processes,''
\newblock {\em Annals of Operations Research}, vol. 28, no. 1, pp. 47--65, 1991.

\bibitem{9210812}
M.~Chen, Z.~Yang, W.~Saad, C.~Yin, H.~V. Poor, and S.~Cui,
\newblock ``A joint learning and communications framework for federated learning over wireless networks,''
\newblock {\em IEEE Transactions on Wireless Communications}, vol. 20, no. 1, pp. 269--283, January 2021.

\bibitem{10004943}
Y.~Hu, X.~Wang, and W.~Saad,
\newblock ``Distributed and distribution-robust meta reinforcement learning ({D}$^{2}$-{RMRL}) for data pre-storage and routing in cube satellite networks,''
\newblock {\em IEEE Journal of Selected Topics in Signal Processing}, vol. 17, no. 1, pp. 128--141, January 2023.

\bibitem{jonker1986improving}
R.~Jonker and T.~Volgenant,
\newblock ``Improving the {H}ungarian assignment algorithm,''
\newblock {\em Operations Research Letters}, vol. 5, no. 4, pp. 171--175, October 1986.

\bibitem{mnih2015human}
V.~Mnih, K.~Kavukcuoglu, D.~Silver, A.~A. Rusu, J.~Veness, M.~G. Bellemare, A.~Graves, M.~Riedmiller, A.~K. Fidjeland, G.~Ostrovski, et~al.,
\newblock ``Human-level control through deep reinforcement learning,''
\newblock {\em Nature}, vol. 518, no. 7540, pp. 529--533, February 2015.

\bibitem{lin1992reinforcement}
L.~Lin,
\newblock {\em \emph{Reinforcement learning for robots using neural networks}},
\newblock Carnegie Mellon University, 1992.

\bibitem{NIPS2017_3f5ee243}
A.~Vaswani, N.~Shazeer, N.~Parmar, J.~Uszkoreit, L.~Jones, A.~N. Gomez, \L. Kaiser, and I.~Polosukhin,
\newblock ``Attention is all you need,''
\newblock in {\em Proc. Conference on Neural Information Processing Systems}, Long Beach, CA, USA, December 2017, vol.~30.

\bibitem{9140367}
Y.~Wang, M.~Chen, Z.~Yang, T.~Luo, and W.~Saad,
\newblock ``Deep learning for optimal deployment of {UAV}s with visible light communications,''
\newblock {\em IEEE Transactions on Wireless Communications}, vol. 19, no. 11, pp. 7049--7063, November 2020.

\bibitem{fan2020theoretical}
J.~Fan, Z.~Wang, Y.~Xie, and Z.~Yang,
\newblock ``A theoretical analysis of deep {Q}-learning,''
\newblock in {\em Proc. Learning for Dynamics and Control}, Feburary 2020, pp. 486--489.

\bibitem{wang2022distributed}
S.~Wang, M.~Chen, Z.~Yang, C.~Yin, W.~Saad, S.~Cui, and H.~V. Poor,
\newblock ``Distributed reinforcement learning for age of information minimization in real-time {I}o{T} systems,''
\newblock {\em IEEE Journal of Selected Topics in Signal Processing}, vol. 16, no. 3, pp. 501--515, April 2022.

\bibitem{abounadi2002stochastic}
J.~Abounadi, D.~P. Bertsekas, and V.~Borkar,
\newblock ``Stochastic approximation for nonexpansive maps: {A}pplication to {Q}-learning algorithms,''
\newblock {\em SIAM Journal on Control and Optimization}, vol. 41, no. 1, pp. 1--22, March 2002.

\bibitem{zhang2023optimization}
W.~Zhang, Y.~Wang, M.~Chen, T.~Luo, and D.~Niyato,
\newblock ``Optimization of image transmission in cooperative semantic communication networks,''
\newblock {\em IEEE Transactions on Wireless Communications}, vol. 23, no. 2, pp. 861--873, 2023.

\bibitem{hu2021rethinking}
J.~Hu, S.~Jiang, S.~A. Harding, H.~Wu, and S.~Liao,
\newblock ``Rethinking the implementation tricks and monotonicity constraint in cooperative multi-agent reinforcement learning,''
\newblock {\em arXiv preprint arXiv:2102.03479}, Feburary 2021.

\end{thebibliography}
\end{document}